\newtheorem{theorem}{Theorem}
\newtheorem{lemma}{Lemma}
\newcommand{\etal}{et~al.}
\newcommand{\weight}[1]{\mathsf{weight}(#1)}
\newcommand{\fnum}{u_f}
\newcommand{\onum}{u_o}
\newcommand{\AC}{\mathit{\pi}}
\newcommand{\ACautomaton}[1]{\mathsf{AC}(#1)}
\newcommand{\failure}{\mathsf{flink}}
\newcommand{\out}{\mathsf{output}}
\newcommand{\rootState}{\mathit{rootState}}
\newcommand{\activeState}{\mathit{activeState}}
\newcommand{\newState}{\mathit{newState}}
\newcommand{\newStatesSet}{\mathit{newStatesSet}}
\newcommand{\deleteStatesSet}{\mathit{deleteStatesSet}}
\newcommand{\outStates}{\mathit{outStates}}
\newcommand{\failStates}{\mathit{failStates}}
\newcommand{\getOutStates}{\mathsf{getOutStates}}
\newcommand{\getFailStates}{\mathsf{getFailStates}}
\newcommand{\outNode}{\mathit{outNode}}
\newcommand{\failureState}{\mathit{failureState}}
\newcommand{\state}{s}
\newcommand{\suf}{\mathsf{slink}}
\newcommand{\rootNode}{\mathit{root}}
\newcommand{\activeNode}{\mathit{activeNode}}
\newcommand{\bnum}{\mathit{g_n}}
\newcommand{\tnum}{\mathit{g_t}}
\newcommand{\node}{\mathit{node}}
\newcommand{\goto}[2]{\mathsf{goto}(#1,#2)}
\newcommand{\stack}{\mathit{stack}}
\newcommand{\queue}{\mathit{queue}}
\newcommand{\trans}{\mathsf{trans}}
\newcommand{\start}{\mathit{start}}
\newcommand{\slink}{\mathit{slink}}
\newcommand{\lnode}{\mathit{lnode}}
\newcommand{\fail}{\mathbf{fail}}
\newcommand{\chk}[2]{\mathsf{checkTransition(#1,#2)}}
\newcommand{\Substr}[1]{\mathsf{Substr}(#1)}
\newcommand{\PrefixSet}[1]{\mathsf{Pref}(#1)}
\newcommand{\ePos}[2]{\mathsf{endPos}_{#2}(#1)}
\newcommand{\DAWG}[1]{\mathsf{DAWG}(#1)}
\newcommand{\isuf}{\mathsf{isuf}}
\newcommand{\nma}{\mathit{NMA}}
\newcommand{\Null}{\mathsf{NULL}}
\newcommand{\Output}{\mathsf{output}}
\newcommand{\True}{\mathbf{true}}
\newcommand{\False}{\mathbf{false}}
\newcommand{\EmptySet}{\emptyset}
\newcommand{\occ}{\mathit{occ}}
\newcommand{\dadd}{\mathit{d_{add}}}
\newcommand{\utotal}{\mathit{total\_u_f}}
	\def\blfootnote{\xdef\@thefnmark{}\@footnotetext}
\begin{document}
	\title{Efficient Dynamic Dictionary Matching \\ with DAWGs and AC-automata}

	\author[1]{Diptarama Hendrian}
	\author[2]{Shunsuke Inenaga}
	\author[1]{Ryo Yoshinaka}
	\author[1]{\\Ayumi Shinohara}
	\affil[1]{Graduate School of Information Sciences, Tohoku University\\
		 Sendai, Japan\\
	 	\texttt{\{diptarama@shino., ry@, ayumi@\}ecei.tohoku.ac.jp}}
	\affil[2]{Department of Informatics, Kyushu University\\
		Fukuoka, Japan\\
		\texttt{inenaga@inf.kyushu-u.ac.jp}}

	\date{}
	\maketitle
	
	\begin{abstract}

The \emph{dictionary matching} is a task
to find all occurrences of pattern strings in a set $D$ (called a dictionary) on a text string $T$.
The Aho-Corasick-automaton (AC-automaton)
which is built on $D$
is a fundamental data structure which enables us to solve the dictionary matching problem in $O(d\log\sigma)$ preprocessing time and $O(n\log\sigma  + \occ)$ matching time,
where $d$ is the total length of the patterns in the dictionary $D$,
$n$ is the length of the text,
$\sigma$ is the alphabet size,
and $\occ$ is the total number of occurrences of all the patterns in the text.
The \emph{dynamic dictionary matching} is a variant where patterns may dynamically be inserted into and deleted from the dictionary $D$.
This problem is called \emph{semi-dynamic dictionary matching} if only insertions are allowed.
In this paper, we propose two efficient algorithms that can solve both problems with some modifications.
For a pattern of length $m$,
our first algorithm supports insertions in
$O(m\log \sigma + \log d / \log \log d)$ time and pattern matching in $O(n \log \sigma + \occ)$
for the semi-dynamic setting.
This algorithm also supports both insertions and deletions in
$O(\sigma m + \log d / \log \log d)$ time and 
pattern matching in $O(n (\log d / \log \log d + \log \sigma) + \occ (\log d / \log \log d))$ time for the dynamic dictionary matching problem by some modifications.
This algorithm is based on the \emph{directed acyclic word graph} (\emph{DAWG})
of Blumer et al. (JACM 1987).
Our second algorithm, which is based on the AC-automaton,
supports insertions in $O(m\log \sigma + \fnum + \onum)$ time for the semi-dynamic setting
and supports both insertions and deletions in $O(\sigma m + \fnum + \onum)$ time for the dynamic setting,
where $\fnum$ and $\onum$ respectively denote the numbers of
states in which the failure function and the output function
need to be updated.
This algorithm performs pattern matching in $O(n \log \sigma + \occ)$ time for both settings.
Our algorithm achieves optimal update time for AC-automaton based methods
over constant-size alphabets,
since \emph{any} algorithm
which explicitly maintains the AC-automaton requires 
$\Omega(m + \fnum + \onum)$ update time.
 
	\end{abstract}
	\textbf{\textit{Keywords---}} Dynamic dictionary matching, AC-automaton, DAWG
	\blfootnote{\textcopyright 2018. This manuscript version is made available under the \doclicenseNameRef license
		
	\doclicenseImage[imagewidth=3cm]}
\section{Introduction} \label{sec:intro}

The pattern matching problem is, 
given a pattern string and a text string,
to output all occurrence positions of the pattern in the text.
Pattern matching is one of the most fundamental problems in string processing
and has been extensively studied for decades.
Efficient pattern matching algorithms are primitives for
various applications such as data mining, search engines, text editors, etc.

A natural extension of the pattern matching problem is to
consider a set of multiple patterns.
That is, given a set $D=\{p_1,p_2,\ldots,p_r\}$ of patterns called
a \emph{dictionary} and a single text,
the task is to find all occurrence positions of each pattern $p_i \in D$
in the text.
This problem is called the \emph{dictionary matching problem}~\cite{aho1975efficient,amir1994dynamic}.
Aho and Corasick's algorithm (AC-algorithm)~\cite{aho1975efficient}
and Commentz-Walter's algorithm~\cite{Commentz-Walter1979}
are fundamental solutions to the dictionary matching problem.
For a dictionary of size $d$ over an alphabet of size $\sigma$,
both of the above algorithms first preprocess the dictionary in $O(d\log\sigma)$ time.
Then, given a text of length $n$,
the occurrences of all patterns $p_i \in D$ in the text
can be reported in $O(n\log\sigma + \occ)$ time by the AC-algorithm,
and in $O(nd\log\sigma)$ time by the Commentz-Walter algorithm,
where $\occ$ is the total number of occurrences of all the patterns in the text.
Notice that $\occ \leq nd$ always holds and hence the $\occ$ term is omitted
in the latter time complexity.

Meyer~\cite{meyer1985incremental} introduced the incremental string matching problem, which is also known as the \emph{semi-dynamic dictionary matching problem},
a variant of dictionary matching that allows insertion of a pattern into the dictionary.
He proposed an algorithm for semi-dynamic dictionary matching,
which updates the AC-automaton when a new pattern is inserted
into the dictionary.
Amir \etal~\cite{amir1994dynamic} introduced the \emph{dynamic dictionary matching problem} which allows for both insertion and deletion of patterns.
Several sophisticated data structures for dynamic dictionary matching
have been proposed in the literature~\cite{amir1994dynamic,amir1995improved,chan2005dynamic,idury1994dynamic,AlstrupHR98,AlstrupHR98tr}.
All these data structures use linear $O(d)$ (words of) space.
More recently, succinct data structures for dynamic dictionary matching have been produced, where the main concern is to store the dictionary in memory space close to the information theoretical minimum~\cite{feigenblat2014improved,hon2009succinct}.

\begin{table}[tb]
	\label{table:summary}
	\begin{center}
		\caption{Comparison of the algorithms for the dynamic dictionary matching. The last four in the table are our proposed methods.
		  Here, $n$ is the length of the text, $d$ is the total length of the dictionary of patterns, and $m$ is the length of the pattern to insert or delete. $k$ and $\epsilon$ are any constants with $k \geq 2$ and $0 < \epsilon < 1$, respectively. $l_{\max}$ is the length of the longest pattern in the dictionary, and $z$ is the size of the AC-automaton before updates.
			$\dagger$ indicates algorithms for semi-dynamic dictionary matching which allows only for insertion. The bounds for Chan \etal's algorithm~\cite{ChanHLS07} hold for constant alphabets.}
		\begin{tabular}{|l|c|c|}
			\hline
			Algorithm & Update time & Pattern matching time \\
			\hline \hline
			Idury \& Sch{\"a}ffer~\cite{idury1994dynamic} & $O(m(k d^{1/k} + \log{\sigma}))$ & $O(n(k+\log\sigma) + k \cdot \occ)$ \\
			\hline
			Amir \etal~\cite{amir1995improved} & $O(m(\frac{\log d}{\log\log d}+\log\sigma))$ & $O(n(\frac{\log d }{\log\log d} + \log\sigma) + \occ \frac{\log d}{\log \log d})$ \\
                        \hline
			Alstrup \etal~\cite{AlstrupHR98,AlstrupHR98tr} & $O(m\log\sigma + \log\log d)$ & $O(n(\frac{\log d }{\log\log d} + \log\sigma) + \occ)$ \\            
			\hline
			Chan \etal~\cite{ChanHLS07} & $O(m\log^2 d)$ & $O((n+ \occ)\log^2 d)$ \\
			\hline
			Hon \etal~\cite{hon2009succinct} & $O(m\log\sigma + \log d)$ & $O(n\log d + \occ)$ \\
			\hline
			Feigenblat \etal~\cite{FeigenblatPS17} & $O(\frac{1}{\epsilon}m\log d)$  & $O(n \log\log d \log\sigma + \occ)$ \\
			\hline
			Meyer$\dagger$~\cite{meyer1985incremental} & $O(l_{\max}\cdot d \cdot\sigma)$ & $O(n\log\sigma + \occ)$ \\
			\hline
			Tsuda \etal~\cite{tsuda1995incremental}  & $O(z\log\sigma)$ &  $O(n\log\sigma + \occ)$ \\
			\hline
			\hline
            DAWG based$\dagger$ & $O(m\log \sigma)$  & $O(n\log\sigma + \occ)$ \\
			\hline
			DAWG based & $O(\sigma m + \frac{\log d}{\log\log d})$  & $O(n (\frac{\log d}{\log\log d} + \log \sigma) + \occ\frac{\log d}{\log\log d})$ \\
			\hline
			AC-automaton based$\dagger$ & $O(m\log\sigma + \fnum + \onum)$ & $O(n\log\sigma + \occ)$ \\
			\hline
			AC-automaton based & $O(\sigma m + \fnum + \onum)$ & $O(n\log\sigma + \occ)$ \\
			\hline
		\end{tabular}	
	\end{center}
\end{table}

Remark that in all the above-mentioned approaches except Meyer's,
the pattern matching time to search a text for dictionary patterns is sacrificed to some extent.
Tsuda \etal~\cite{tsuda1995incremental} proposed
a dynamic dictionary matching algorithm, which follows and extends Meyer's method.
Whilst Tsuda \etal's method retains $O(n \log \sigma + \occ)$ pattern matching time, still it requires $O(z \log \sigma)$ time to update the AC-automaton upon each insertion/deletion, where $z$ is the size of the AC-automaton.
Note that in the worst case
this can be as bad as constructing the AC-automaton from scratch,
since $z$ can be as large as the dictionary size $d$.
Ishizaki and Toyama~\cite{ishizaki2012incremental} introduced a data structure called an \emph{expect tree} which efficiently updates the dictionary for insertion of patterns and showed some experimental results, but unfortunately no theoretical analysis were provided.
See Table~\ref{table:summary} for a summary of the update times and
pattern matching times for these algorithms.

Along this line, in this paper, we propose new efficient algorithms for the semi-dynamic and dynamic dictionary matching problems, where pattern matching can still be performed in $O(n \log{\sigma} + \occ)$ time. 

Firstly, we show a dynamic dictionary matching algorithm which is based on
Blumer \etal's \emph{directed acyclic word graphs} (\emph{DAWGs})~\cite{blumer1985smallest,Blumer:1987:CIF:28869.28873}.
The DAWG of a dictionary $D$ is a (partial) DFA of size $O(d)$
which recognizes the suffixes of the patterns in $D$.
We show how to perform dynamic dictionary matching with DAWGs,
by modifying Kucherov and Rusinowitch's algorithm
which originally uses DAWGs for pattern matching with variable length don't cares~\cite{Kucherov1997129}.
The key idea is to use efficient
\emph{nearest marked ancestor} (\emph{NMA}) data structures~\cite{westbrook92:_fast_increm_planar_testin,AlstrupHR98,AlstrupHR98tr}
on the tree induced from the suffix links of the DAWG.
For the semi-dynamic dictionary matching, 
our DAWG method achieves $O(m \log \sigma)$ update time,
$O(n \log \sigma + \occ)$ pattern matching time,
and uses linear $O(d)$ space,
where $m$ is the length of pattern $p$ to insert.
For the dynamic version of the problem,
our DAWG method uses $O(\sigma m + \log d / \log \log d)$
update time, $O(n (\log d / \log \log d + \log \sigma) + \occ \log d / \log \log d)$ time, and $O(d)$ space.
The term $\sigma m$ in the update time is indeed unavoidable
for maintaining the DAWG in the dynamic setting,
namely, we will also show that there is a sequence of
insertion and deletion operations for patterns of length $m$
such that each insertion/deletion operation takes $\Omega(\sigma m)$ time.

Secondly, we present another algorithm for the semi-dynamic and dynamic dictionary matching problem
which is based on the AC-automaton.
This is closely related to our first approach,
namely, the second algorithm updates the AC-automaton with the aid of the DAWG.
The algorithm uses $O(d)$ space,
finds pattern occurrences in the text in $O(n \log \sigma + \occ)$ time,
and 
updates the AC-automaton in $O(m\log\sigma + \fnum + \onum)$ time
with additional DAWG update time, $O(m\log\sigma)$ time for semi-dynamic and $O(\sigma m)$ time for dynamic settings,
where
$\fnum$ is the number of states whose failure link needs to be updated,
and $\onum$ is the number of states on which the value of the output function needs to be updated.
Therefore, when $\fnum$ and $\onum$ are sufficiently small and $\sigma$ is constant,
our update operation can be faster than other approaches.
Notice that 
$\onum$ is negligible unless the pattern $p$ to insert/delete
is a common prefix of many other patterns;
in particular $\onum = 0$ for any prefix codes.
Also, $\fnum$ is negligible unless
the prefixes of $p$ are common substrings of many other patterns.
In what follows, the update time of the AC-automaton refers to the time cost to update the AC-automaton after insertion/deletion of a pattern.
We emphasize that the update time of our algorithm is optimal
for constant-size alphabets,
since \emph{any} algorithm which explicitly maintains the AC-automaton
must use at least $\Omega(m + \fnum + \onum)$ time
to update the automaton.
Finally, we give tight upper and lower bounds on $\fnum$ and $\onum$
in the worst case.

A preliminary version of this work appeared in~\cite{DiptaramaYS16}.


\section{Preliminaries}

Let $\Sigma$ denote an \emph{alphabet} of size $\sigma$.
An element of $\Sigma^*$ is called a \emph{string}.
For a string $w$, the length of $w$ is denoted by $|w|$.
The \emph{empty string}, denoted by $\varepsilon$, is the string of length $0$.
For a string $w = x y z$, strings $x$, $y$, and $z$ are called \emph{prefix}, \emph{substring}, and \emph{suffix} of $w$, respectively.
For a string $w$, let $\Substr{w}$ denote the set of all substrings of $w$, and for a set $W = \{w_1, w_2, \ldots, w_r \}$ of strings,
let $\Substr{W} = \bigcup_{i=1}^{r} \Substr{w_i}$.
Similarly, let $\PrefixSet{W}$ be the set of all prefixes of strings in $W$.
For a string $w$, $w[i]$ denotes the $i$-th symbol of $w$ and $w[i:j]$ denotes the substring of $w$ that begins at position $i$ and ends at position $j$.

Let $D=\{p_1,p_2,\ldots,p_r\}$ be a set of \emph{patterns} over $\Sigma$,
called a \emph{dictionary}.
Let $d$ be the total length of the patterns in the dictionary $D$,
namely, $d = \sum_{i=1}^{r} |p_i|$.
The \emph{Aho-Corasick Automaton}~\cite{aho1975efficient} of $D$, denoted by $\ACautomaton{D}$, is a \emph{trie} of all patterns in $D$,
consisting of \emph{goto}, \emph{failure} and \emph{output} functions.
We often identify a state $s$ of $\ACautomaton{D}$
with the string obtained by concatenating all the labels found on the path from the root to the state $s$.
The state transition function \emph{goto} is defined so that 
for any two states $s, s' \in \PrefixSet{D}$ and any character $c \in \Sigma$, if $s' = s c$ then $s'=\goto{s}{c}$.
The failure function is defined by $\failure(s)=s'$ where $s'$ is the longest proper suffix of $s$ such that $s' \in \PrefixSet{D}$.
Finally, $\out(s)$ is the set of all patterns that are suffixes of $s$.
$\ACautomaton{D}$ is used to find occurrences of any pattern in $D$ on a text.
In this paper 
we omit the basic construction algorithm of $\ACautomaton{D}$ and
how it can be used to solve the dictionary pattern matching problem
(see~\cite{aho1975efficient,JewelsOfStringology2002} for details).

For any string $x$,
let
\[ \ePos{x}{D} = \{(i, j) \mid x = p_i[j-|x|+1:j], |x| \leq j \leq |p_i|, p_i \in D \},\]
namely, $\ePos{x}{D}$ represents the set of ending positions
of $x$ in patterns of $D$.
For any $x, y \in \Substr{D}$, we define the equivalence relation $\equiv_{D}$
such that $x \equiv_{D} y$ iff $\ePos{x}{D} = \ePos{y}{D}$.
We denote by $[x]_D$ the equivalence class of $x$ with respect to $\equiv_D$.
The \emph{directed acyclic word graph} (\emph{DAWG})~\cite{Blumer:1987:CIF:28869.28873,blumer1985smallest} of $D$, denoted by $\DAWG{D}$, is an edge-labeled directed acyclic graph $(V, E)$ such that
\begin{eqnarray*}
  V & = & \{ [x]_D \mid x \in \Substr{D} \}, \\
  E & = & \{([x]_D, c, [xc]_D) \mid x, xc \in \Substr{D}, c \in \Sigma, x \not \equiv_{D} xc \}.
\end{eqnarray*}
Namely, each node\footnote{To avoid confusion, we refer to a vertex in DAWGs as a \textit{node},
and a vertex in AC-automata as a \textit{state} in this paper.}
of $\DAWG{D}$ represents each equivalence class of substrings of $D$,
and henceforth we will identify a DAWG node with an equivalence class
of substrings.
The node $[\varepsilon]_D$ is called the \emph{source} of $\DAWG{D}$.
For each node $[x]_D$ except the source, the \emph{suffix link}
is defined by $\suf([x]_D) = [y]_D$, where $y$ is the longest suffix of $x$ satisfying $x \not \equiv_{D} y$.
For convenience, we define $\suf^{1}([x]_D) = \suf([x]_D)$
and $\suf^i([x]_D) = \suf(\suf^{i-1}([x]_D))$ for $i > 1$.
A node $v$ of $\DAWG{D}$ is called a \emph{trunk node}
if there is a path from the source to $v$
which spells out some prefix of a pattern in $\PrefixSet{D}$,
and it is called a \emph{non-trunk node} otherwise. 
An edge $e$ from $[x]_D$ to $[xc]_D$ labeled by $c$ is a \emph{primary edge} if both $x$ and $xc$ are the longest string in their equivalence classes,
otherwise it is a \emph{secondary edge}.
It is known (c.f.~\cite{Blumer:1987:CIF:28869.28873})
that the numbers of nodes, edges, and suffix links of $\DAWG{D}$
are all linear in $d$.
Fig.~\ref{fig:inverse} shows an example of a DAWG.

By the properties of AC-automata and DAWGs,
for each state $s$ in $\ACautomaton{D}$,  
there exists a unique node $v$ in $\DAWG{D}$ that corresponds to $s$, so that $\ACautomaton{D}$ can be consistently embedded into $\DAWG{D}$.
Because $s \in \PrefixSet{D}$, the corresponding node $v$ is a trunk node and each trunk node has its corresponding state.
Therefore, there exists a one-to-one mapping from the set of trunk nodes in $\DAWG{D}$ to the states of $\ACautomaton{D}$.
We denote this mapping by $s = \AC(v)$,
where $v$ is a DAWG trunk node and $s$ is the corresponding AC-automaton state.
We denote $v = \AC^{-1}(s)$ iff $s = \AC(v)$.
Fig.~\ref{fig:example} (a) and (b) show the AC-automaton and DAWG
of $D=\{{\tt abba},{\tt aca},{\tt cbb}\}$, respectively, 
where each number in nodes and states expresses the correspondence.

Our algorithm to follow will make a heavy use of the following lemma,
which characterizes the relationship between
the states of $\ACautomaton{D}$ and the trunk nodes of $\DAWG{D}$.
\begin{lemma}\label{lemma:slink and flink}
Let $s$ and $s'$ be any states in $\ACautomaton{D}$, and let
$v = \AC^{-1}(s)$ and $v' = \AC^{-1}(s')$ be corresponding trunk nodes in $\DAWG{D}$.
Then, $s' = \failure(s)$ iff there exists an integer $k \geq 1$ such that $v' = \suf^{k}(v)$, and when $k \geq 2$, $\suf^{i}(v)$ is a non-trunk node for all $1 \leq i < k$.
\end{lemma}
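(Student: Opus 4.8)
The plan is to establish both directions of the equivalence by relating the string a state or node "spells out" to the string-theoretic definitions of $\failure$ and $\suf$. Recall that a trunk node $v = \AC^{-1}(s)$ represents an equivalence class $[x]_D$ whose longest member is exactly the prefix $s \in \PrefixSet{D}$; indeed $s$ is the unique longest string in $[s]_D$, since any string strictly longer than a prefix appearing in $\PrefixSet{D}$ would have a different set of end positions in the patterns. So throughout, I identify $s$ with that longest representative of the class $v$.

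First I would prove the forward direction. Suppose $s' = \failure(s)$, so $s'$ is the longest proper suffix of $s$ lying in $\PrefixSet{D}$. Consider the chain $v = \suf^0(v), \suf^1(v), \suf^2(v), \ldots$ of DAWG nodes; by definition of $\suf$, following it from $v$ visits equivalence classes whose longest representatives are strictly decreasing proper suffixes of $s$, and every proper suffix $y$ of $s$ with $[y]_D \neq [s]_D$ appears as the longest representative of exactly one node in this chain. Since $s'$ is a proper suffix of $s$ with $s' \in \PrefixSet{D}$, and since $s$ is a prefix while $s'$ is a strictly shorter prefix, we have $\ePos{s'}{D} \neq \ePos{s}{D}$ (the two prefixes have different end-position sets because they have different lengths as prefixes of the patterns containing them), hence $s' \not\equiv_D s$ and $s'$ is the longest representative of some node $\suf^k(v)$ with $k \geq 1$. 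That node is a trunk node because $s' \in \PrefixSet{D}$, and $\AC(\suf^k(v)) = s'$, i.e. $v' = \suf^k(v)$. For the intermediate nodes: any $\suf^i(v)$ with $1 \leq i < k$ has longest representative a suffix $y$ of $s$ with $|s'| < |y| < |s|$; if $\suf^i(v)$ were a trunk node, then $y \in \PrefixSet{D}$, contradicting the maximality of $s' = \failure(s)$ as the longest proper-suffix prefix. So all intermediate nodes are non-trunk.

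For the converse, suppose $v' = \suf^k(v)$ for some $k \geq 1$ with all $\suf^i(v)$, $1 \leq i < k$, non-trunk nodes. Then the longest representative of $v'$ is a proper suffix $s'$ of $s$ with $s' \in \PrefixSet{D}$ (since $v'$ is a trunk node and $\AC(v') = s'$). I must show $s'$ is the \emph{longest} such proper suffix, which is exactly $\failure(s)$. Any proper suffix $y$ of $s$ with $|y| > |s'|$: since $|y| < |s|$, the class $[y]_D$ is one of the nodes $\suf^i(v)$ with $1 \leq i < k$ (here I use that every proper suffix of $s$ not equivalent to $s$ occurs as the longest representative of some node strictly between $v$ and the end of the chain, and $y$ cannot be equivalent to $s$ since it is a strictly shorter prefix-substring — more carefully, $y$ is a substring occurring as a suffix of the prefix $s$, and equals the longest representative of its class only if its class is on the chain, otherwise some longer suffix of $s$ is). Because all those intermediate nodes are non-trunk, $y \notin \PrefixSet{D}$. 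Hence no proper suffix of $s$ longer than $s'$ lies in $\PrefixSet{D}$, so $s' = \failure(s)$, i.e. $s' = \failure(s)$ as desired; then by definition $v' = \AC^{-1}(s') = \AC^{-1}(\failure(s))$.

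The main obstacle I anticipate is the bookkeeping around "every proper suffix of $s$ that is not $\equiv_D$-equivalent to $s$ is the longest representative of exactly one node on the suffix-link chain from $v$." This is the standard structural fact about DAWGs (the suffix links form a tree and walking the chain enumerates suffixes by strictly decreasing length, skipping only those that stay in the current class), so I would either cite it from Blumer et al. or prove it quickly by induction on $k$ using the definition $\suf([x]_D) = [y]_D$ with $y$ the longest suffix of $x$ with $x \not\equiv_D y$. Everything else reduces to the elementary observation that two distinct-length prefixes of patterns have distinct end-position sets, which gives $s' \not\equiv_D s$ whenever $s' \in \PrefixSet{D}$ is a proper suffix of $s$, ensuring the relevant nodes are genuinely distinct points on the chain.
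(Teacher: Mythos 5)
Your proof is correct and takes essentially the same route as the paper's own (much terser) proof: both rest on the facts that a state $s \in \PrefixSet{D}$ is the longest member of its equivalence class, that the suffix-link chain from $v$ passes through the classes of all suffixes of $s$ in decreasing length, and that a chain node is a trunk node exactly when its class contains an element of $\PrefixSet{D}$, so $\failure(s)$ corresponds to the first trunk node met on the chain. One small caution: your auxiliary claim that every proper suffix of $s$ not equivalent to $s$ is the \emph{longest} representative of some chain node is stronger than what is true (a suffix can be a non-longest member of its chain node), but your argument never needs this full strength --- only the weaker facts you also state suffice --- so the proof stands.
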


\begin{proof}
($\Longrightarrow$)
  Suppose $s' = \failure(s)$.
  Then, by definition, $s'$ is a proper suffix of $s$.
  Hence there exists an integer $k \geq 1$ such that
  $v' = \suf^{k}(v)$.
  Also, by definition, $k$ is the smallest such that $\slink^k(v)\in \PrefixSet{D}$.
  Hence, when $k \geq 2$, $\suf^{i}(v)$ is a non-trunk node for all $1 \leq i < k$.

($\Longleftarrow$)
  Suppose there exists an integer $k \geq 1$ such that $v' = \suf^{k}(v)$.
  When $k = 1$, clearly $s' = \failure(s)$.
  When $k \geq 2$ and $\suf^{i}(v)$ is a non-trunk node for all $1 \leq i < k$,
  then $k$ is the smallest integer such that $v' = \suf^{k}(v)$
  is a trunk node.
  Hence $s' = \failure(s)$.
\end{proof}


\section{Maintenance of Inverse Suffix Links of DAWG}\label{sec:invdawg}

\begin{figure}[t]
	\centering
	\includegraphics[scale=0.65]{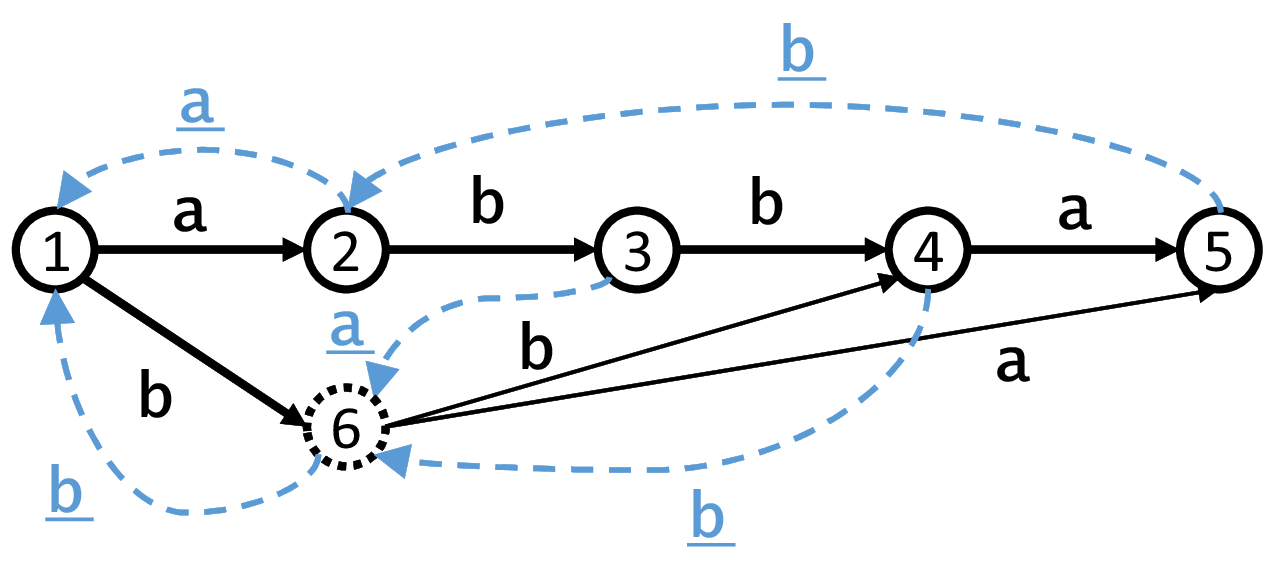}\\
	\caption{$\DAWG{\{\tt abba\}}$. Solid-line circles show trunk nodes and the dashed-line circle shows a non-trunk node.
		Thick solid lines, thin solid lines, and dashed lines show primary edges, secondary edges and suffix links, respectively.
	}
	\label{fig:inverse}
\end{figure}

\begin{figure}[t]
	\centering
	\begin{minipage}[t]{0.49\hsize}
		\centering
		\includegraphics[scale=0.45]{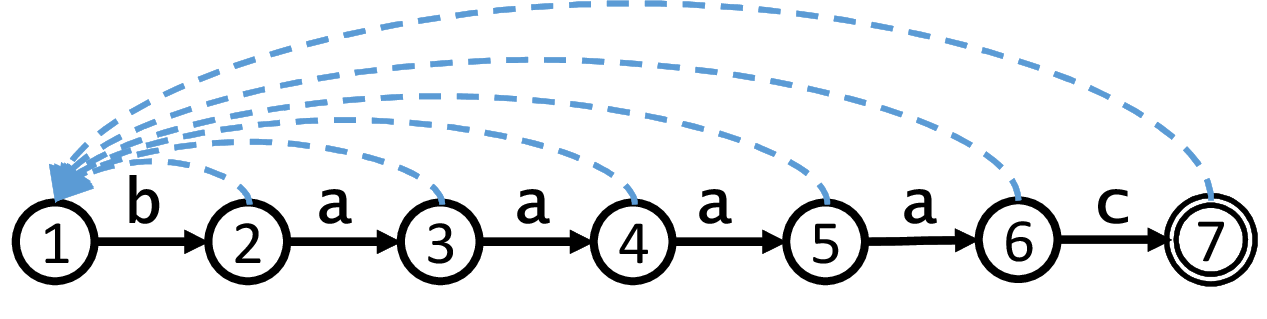}\\
		\ \ \ \scriptsize{(a)}
	\end{minipage}
	\begin{minipage}[t]{0.49\hsize}
		\centering
		\includegraphics[scale=0.45]{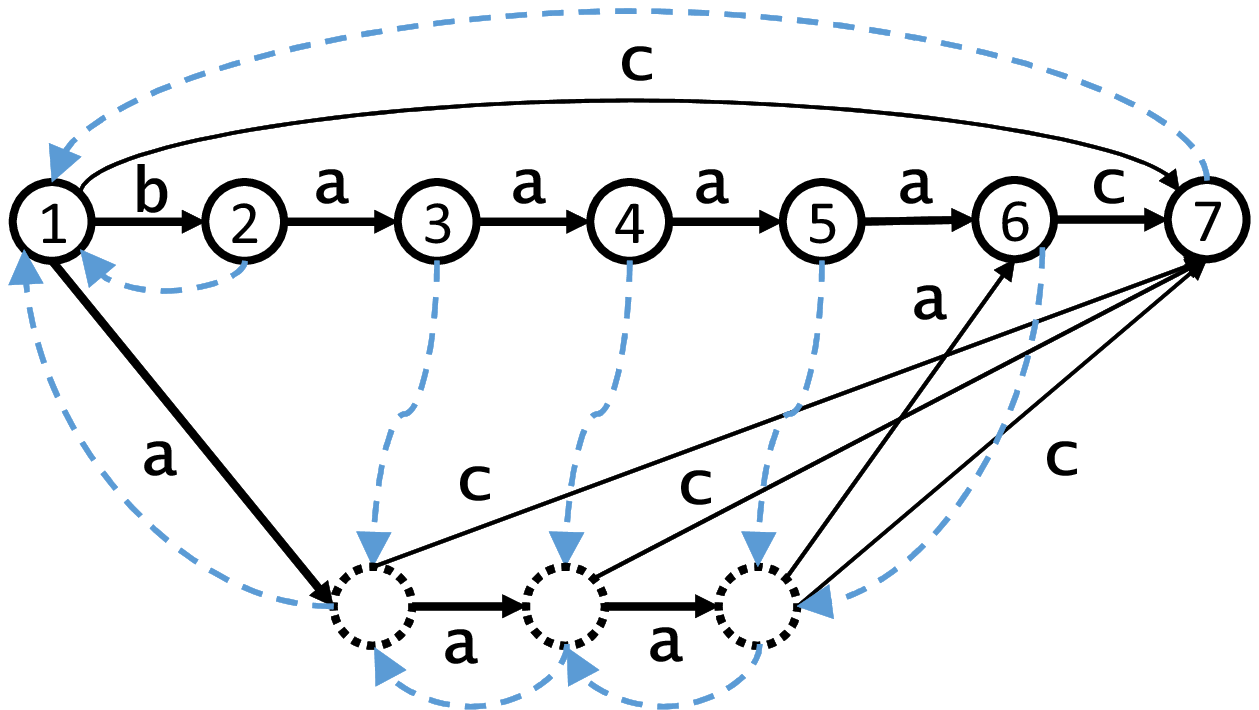}\\
		\ \ \  \scriptsize{(b)}
	\end{minipage}
	\caption{(a)$\ACautomaton{\{{\tt baaaac}\}}$ and (b) $\DAWG{\{{\tt baaaac}\}}$.}
	\label{fig:comparison}
\end{figure}

Meyer~\cite{meyer1985incremental} and Tsuda \etal~\cite{tsuda1995incremental} used the inverse of the failure function to update the AC-automaton.
Although the inverse failure function can be stored in a total of $O(d)$ space,
it is not trivial whether one can efficiently access and/or update
the inverse failure function,
because the number of inverse failure links of each state may change dynamically
and can be as large as the number of states in the AC-automaton.
For instance, let us consider $\ACautomaton{D}$ for $D = \{{\tt baaaac}\}$ over $\Sigma = \{ {\tt a}, {\tt b}, {\tt c}\}$ in Fig.~\ref{fig:comparison}~(a).
Its root is pointed by $6$ failure links.
When a new pattern ${\tt c}$ is inserted to $D$, then the above algorithms first create a new state $s$, a new transition from the root to $s$, and a failure link from $s$ to the root.
The real difficulty arises when they try to find which suffix links should be updated to point at $s$;
they must follow all the 6 inverse failure links from the root
and get $6$ states numbered $2$, $3$, \ldots, $7$, and check whether there is an edge labeled ${\tt c}$ from each of them, although only one state $7$ should be updated.
Ishizaki and Toyama~\cite{ishizaki2012incremental} introduced an auxiliary tree structure called an \emph{expect tree} to reduce the number of the candidates and showed some experimental results, but no theoretical analysis is provided.
Unfortunately, their algorithm behaves the same for the above example.
Therefore, maintaining the inverse failure links to update the AC-automaton might be inefficient.

In order to overcome this difficulty, we pay our attention to the suffix links of $\DAWG{D}$, instead of the failure links of $\ACautomaton{D}$.
It is known~(see, e.g.~\cite{blumer1985smallest, Blumer:1987:CIF:28869.28873, JewelsOfStringology2002, Crochemoretext}) that 
the inverse suffix links of all nodes in $\DAWG{D}$ form the suffix tree of the reversed patterns in $D$,
so that for any node $v$ in $\DAWG{D}$, 
each suffix link pointing at $v$ is labeled by a distinct symbol which is the first symbol of the edge label in the suffix tree.
Formally, the label of a suffix link is defined as follows.
Let $xy$ be the longest string in $[xy]_D$, $y$ be the longest string in $[y]_D$, and $\slink([xy]_D)=[y]_D$,
the label of this suffix link is $x[|x|]$.
In Fig.~\ref{fig:inverse}, the label of each suffix link is showed by an underlined symbol.
Therefore, the number of suffix links that point at $v$ is at most $\sigma$, and
the inverse suffix links can be accessed and updated in $O(\log \sigma)$ time using $O(d)$ total space.
This means that 
when a pattern of length $m$ is inserted to or deleted from the dictionary,
the inverse suffix links can be maintained in $O(m\log \sigma)$ time.

It is known that DAWGs
can be used for solving the pattern matching problem with a single pattern~\cite{Crochemore1988}.
However, it is not trivial to maintain the output function efficiently for dynamic and multiple patterns, as is pointed out by Kucherov and Rusinowitch~\cite{Kucherov1997129}.
In the next section,
we shall show our algorithm which efficiently maintains
the output function on the DAWG.


	\section{Dynamic Dictionary Matching by using DAWG}\label{sec:ddmdawg}

\begin{algorithm2e}[t]
\setlength{\baselineskip}{0mm}
\caption{Dynamic dictionary matching algorithm by using a DAWG}
\label{alg:ddmdawg}

\KwIn{A text string $T$.}
\KwOut{Occurence positions of all pattern in the dictionary.}

\SetKwProg{Fn}{Function}{}

$\activeNode = \rootNode$\;
\For{$1\leq i \leq n$}{
	\While{$\chk{\activeNode}{T[i]}$ $\mathbf{and}$ $\activeNode \ne \rootNode$}{
			$\activeNode = \suf(\activeNode)$\;
	}
	\If{$\chk{\activeNode}{T[i]}$}{
		$\activeNode = \trans(\activeNode,T[i])$\;
	}
	$\outNode = \activeNode$\;
	\If{$\outNode\mathrm{\ is\ marked}$}{
		$\Output(\outNode)$\;
	}
	\While{$\nma(\outNode) \ne \Null$}{
		$\outNode = \nma(\outNode)$\;
		$\Output(\outNode)$\;
	}
}

\Fn{$\chk{\node}{c}$}{
	\lIf{$node$ $\mathrm{is\ not\ a\ trunk\ node}$}{$\Return$ $\False$}
	\lIf{$\trans(\node,c) = \Null$}{$\Return$ $\mathbf{false}$}
	\lIf{$c\ \mathrm{is\ a\ secondary\ edge}$}{$\Return$ $\False$}
	\lIf{$\trans(\node,c)$ $\mathrm{is\ not\ a\ trunk\ node}$}{$\Return$ $\False$}
	$\Return\ \True$\;
}

\end{algorithm2e}

In this section, we will describe how to perform dynamic dictionary matching with the DAWG.
This algorithm is a simple modification of Kucherov and Rusinowitch's algorithm~\cite{Kucherov1997129} for matching multiple strings with variable length don't-care symbols.

First we will discuss the time complexity to update the DAWG in the semi-dynamic and dynamic settings.
As it was shown in~\cite{Blumer:1987:CIF:28869.28873} the DAWG can be updated in $O(m\log\sigma)$ \emph{amortized} time for an insertion of a pattern of length $m$ in the semi-dynamic setting.
For the dynamic setting, Kucherov and Rusinowitch~\cite{Kucherov1997129} gave an algorithm which deletes a pattern from the dictionary, and claimed that the update time for deletion and insertion is the same as in the semi-dynamic setting.
However, in what follows we show that this is not true when the alphabet size is super-constant. Namely, the number of edges to be constructed when we split a DAWG node can be amortized constant by the \emph{total length} of the input strings in the semi-dynamic setting, but this amortization argument does not hold in the dynamic setting.
That is, we obtain the following lower bound for updating the DAWG in the dynamic setting.
\begin{lemma}\label{lem:DAWGlow}	
	In the dynamic setting where both insertion and deletion of patterns are supported,
	there exists a family of patterns such that
	$\Omega(\sigma m)$ time is needed when updating the DAWG for insertion and deletion of each pattern.
\end{lemma}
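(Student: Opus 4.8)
The plan is to turn the claim into a size‑counting argument. I will exhibit a base dictionary $D_0$ and a pattern $p$ of length $m$ for which $\DAWG{D_0\cup\{p\}}$ has $\Omega(\sigma m)$ more edges than $\DAWG{D_0}$; then, under the standard assumption (the same one invoked below for the AC‑automaton) that an algorithm maintaining the DAWG stores it explicitly as a labelled graph, any algorithm that turns a stored $\DAWG{D_0}$ into a stored $\DAWG{D_0\cup\{p\}}$ must perform at least as many edge creations as the difference in the number of edges, hence $\Omega(\sigma m)$ work for $\insertp{p}$. Since the DAWG of a set is canonical, deleting $p$ from $\DAWG{D_0\cup\{p\}}$ must likewise destroy $\Omega(\sigma m)$ edges and recovers exactly $\DAWG{D_0}$, so $\deletep{p}$ also costs $\Omega(\sigma m)$. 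Therefore, after first building $D_0$, the infinite alternating sequence $\insertp{p},\deletep{p},\insertp{p},\dots$ — each operation of which acts on a pattern of length $m$ — costs $\Omega(\sigma m)$ per operation, which is the required family.

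For the construction, take $\Sigma=\{a,g,h,f_1,\dots,f_{\sigma-3}\}$, let $p=gh^{m-1}$, and let
\[ D_0=\{\, a\,g\,h^{i-1}\,f_j \mid 1\le i\le m,\ 1\le j\le \sigma-3 \,\}. \]
The first step is to read off $\DAWG{D_0}$. Because $g$ occurs in $D_0$ only as the second letter of every pattern, every occurrence of $gh^{i-1}$ is preceded by $a$ and by nothing longer, so its equivalence class is exactly $q_i=\{\,gh^{i-1},\,agh^{i-1}\,\}$, a single DAWG node whose longest member is $agh^{i-1}$; and directly from the definition of DAWG edges, $q_i$ has one outgoing edge on each $f_j$ together with, when $i<m$, one on $h$, i.e.\ out‑degree $\sigma-3+O(1)=\Theta(\sigma)$.

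The second step is to analyse the insertion of $p$. Every substring of $p=gh^{m-1}$ — precisely the strings $gh^{k}$ and $h^{k}$ — already occurs in $D_0$, so adding $p$ creates no new substring and only refines equivalence classes. One checks that the classes that actually change are exactly $q_1,\dots,q_m$: each $gh^{i-1}$ becomes a prefix of the pattern $p$ and therefore splits off from $agh^{i-1}$, so $q_i$ breaks into the two singleton nodes $\{gh^{i-1}\}$ and $\{agh^{i-1}\}$, each inheriting the same $\Theta(\sigma)$ outgoing edges that $q_i$ carried, while every other node is untouched and other edges are at most redirected. Summing the out‑degrees over $i=1,\dots,m$, the number of edges grows by $m(\sigma-3)+(m-1)=\Omega(\sigma m)$, completing the argument with the first paragraph.

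The only delicate point is the bookkeeping of the second step: one has to verify that adding $p$ refines \emph{no} class other than the $q_i$ — in particular that each run $h^{k}$ stays a singleton class (it is left‑maximal in $D_0$, being preceded there by both $h$ and $g$) and that each target $[gh^{i-1}f_j]_{D_0}$ is unaffected (since $f_j$ does not occur in $p$) — so that the $\Omega(\sigma m)$ extra edges genuinely appear and are not offset by coincidental mergers elsewhere. It is worth remarking, finally, why this does not contradict the $O(m\log\sigma)$ amortized update bound of the semi‑dynamic case: there the $\Omega(\sigma m)$ cost of splitting the wide nodes $q_i$ during the insertion of $p$ — operationally, a cascade of $m$ node splits, each copying $\Theta(\sigma)$ edges to the clone — is amortized against the ever‑growing total length $d$ of all patterns inserted so far, whereas the dynamic delete–insert cycle lets us incur this cost again and again while $d$ stays bounded, so no such amortization is available.
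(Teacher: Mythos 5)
Your proposal is correct and follows essentially the same route as the paper: a dictionary of size $\Theta(\sigma m^2)$ in which every prefix of $p$ occurs only preceded by one fixed letter and followed by all $\sigma-O(1)$ filler letters, so that inserting $p$ forces $m$ class splits each duplicating $\Theta(\sigma)$ out-edges, deletion is symmetric, and alternating insert/delete defeats any amortization; your edge-difference/canonicity framing merely makes the paper's node-splitting count algorithm-independent (a nice touch, but not a different argument). One harmless slip: for $i=m$ the class $q_m$ also contains $h^{m-1}$ (in $D_0$ every occurrence of $h^{m-1}$ is preceded by $g$), and after inserting $p$ it splits into $\{h^{m-1},gh^{m-1}\}$ and $\{agh^{m-1}\}$ rather than two singletons, which does not affect the $\Omega(\sigma m)$ count.
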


\begin{proof}
	To show an $\Omega(\sigma m)$ lower bound,
	consider a pattern $p = (\mathtt{ba})^{\frac{m}{2}}$
	and an initial dictionary $D = \{(\mathtt{ab})^i\mathtt{a}^jc \mid 1 \leq i \leq \frac{m}{2},\ j\in\{0,1\},\ c \in \Sigma\setminus \{\mathtt{a},\mathtt{b}\}\}$
	of size $d = \Theta(\sigma m^2)$.
	We insert $p$ to the dictionary and update $\DAWG{D}$ into $\DAWG{D\cup \{p\}}$.
	In this case we need to split a node each time we read a symbol from $p$,
	and construct $\sigma - 2$ edges labeled by $c \in \Sigma \setminus \{\mathtt{a}, \mathtt{b}\}$ from the new node.
	Hence, we need to create $\Omega(\sigma m)$ edges when we update the DAWG.
	Moreover, the same computation time $\Omega(\sigma m)$ is required when we delete the same pattern $p$ from $D\cup\{p\}$ and update the DAWG.
	
	If we repeat this operation more than $m$ times by inserting and deleting $p$, 
	we cannot amortize the update cost by the size of the dictionary.
	Therefore, we need $\Omega(\sigma m)$ operations to update the DAWG when inserting or deleting a pattern.
\end{proof}

The above lower bound is tight, namely,
below we will show a matching upper bound for updating the DAWG in the dynamic setting.
\begin{lemma}\label{lem:DAWGup}		
	In the dynamic setting where both insertion and deletion of patterns are supported,
	the DAWG can be updated in $O(\sigma m)$ time for insertion and deletion of patterns.
\end{lemma}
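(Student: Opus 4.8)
The plan is to implement an insertion via Blumer \etal's online DAWG construction~\cite{Blumer:1987:CIF:28869.28873} and a deletion via the mirror-image procedure of Kucherov and Rusinowitch~\cite{Kucherov1997129}, and then to bound the cost of a \emph{single} update without relying on amortization over the whole dictionary. Both algorithms scan the pattern $p=p[1]\cdots p[m]$ one symbol at a time; at step $i$ they obtain the node of $p[1:i]$ (creating a fresh trunk node when it is new), add one incoming primary edge together with possibly several incoming secondary edges by walking up a suffix-link chain, and split at most one existing node into two --- for a deletion, merge at most one now-redundant node back into an existing one --- which entails redirecting a contiguous suffix-link interval of incoming edges. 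If the outgoing transitions of each node, and the incoming suffix links of each node (which number at most $\sigma$ by the labelling recalled in Section~\ref{sec:invdawg}), are kept in balanced search trees, then a single search, insertion, or deletion of an edge costs $O(\log\sigma)$, whereas duplicating or discarding a whole such tree of size $k$ costs only $O(k)=O(\sigma)$. Hence it suffices to show that one update performs $O(m)$ of the cheap edge operations and at most $m$ whole-tree duplications or discards; the lemma then follows, since $O(m\log\sigma)+O(\sigma m)=O(\sigma m)$.

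I would establish the required counts in the following order. \emph{(1) New nodes.} At most one fresh trunk node and at most one clone are created per scanned symbol, so an insertion adds $O(m)$ nodes and a deletion removes $O(m)$, with each fresh trunk node reached by a single primary in-edge. \emph{(2) Secondary edges into the new sink.} Writing $t_i$ for the node of $p[1:i]$ and $\ell_i$ for the length of the longest string represented by $\suf(t_i)$, the nodes that acquire a new $p[i]$-labelled secondary edge at step $i$ lie on the suffix-link chain from $\suf(t_{i-1})$ down to the first ancestor already having a $p[i]$-transition; since lengths strictly decrease along this chain and it stops at a node of length $\ell_i-1$, the number of such edges at step $i$ is $O(\ell_{i-1}-\ell_i+1)$, which telescopes to $O(m)$ over the update, and the same bound covers the suffix-link walks used to locate the stopping node and the split point. \emph{(3) Node splits and merges.} There is at most one per scanned symbol, hence $O(m)$ in total, and each copies (respectively discards) the at most $\sigma$ outgoing transitions of a single node in $O(\sigma)$ time --- this is exactly the $\Theta(\sigma m)$ contribution, which is unavoidable by Lemma~\ref{lem:DAWGlow}. \emph{(4) Redirected edges.} When a node is split, the incoming edges moved to its clone again form a suffix-link interval, and an accounting by suffix-link lengths analogous to (2) bounds their total number over one update by $O(m)$, each redirection costing $O(\log\sigma)$. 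Combining (1)--(4) gives $O(\sigma m)$ for an insertion.

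For a deletion of $p$, transforming $\DAWG{D}$ into $\DAWG{D\setminus\{p\}}$ exactly reverses the edits that inserting $p$ into $\DAWG{D\setminus\{p\}}$ would perform, so the same $O(\sigma m)$ bound on the number of affected nodes, edges, and suffix links holds; it then remains to run Kucherov and Rusinowitch's deletion routine within this budget, scanning $p$ through the DAWG, removing the trunk sink nodes whose existence is due solely to $p$, merging each clone that has become redundant (discarding its at most $\sigma$ outgoing transitions in $O(\sigma)$ time), and repairing the affected secondary edges and suffix links. The step I expect to be the main obstacle is making the suffix-link-length accounting in (2) and especially (4) fully rigorous --- i.e.\ proving that the cumulative suffix-link walk length and the cumulative number of redirected edges during one update are genuinely $O(m)$, with no hidden dependence on $d$ --- which amounts to localizing to a single pattern the amortized analysis of~\cite{Blumer:1987:CIF:28869.28873}. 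A further subtlety, specific to deletion, is certifying that a clone may be merged back exactly when no pattern other than $p$ witnesses the corresponding split; this can be decided with a little auxiliary bookkeeping, such as end-position counts, maintained alongside the DAWG.
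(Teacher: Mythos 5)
Your proposal is correct and follows essentially the same route as the paper: a per-character case analysis of the online update in which the only $\Theta(\sigma)$ cost per character comes from copying the outgoing edges of a split (or merged) node, all other suffix-link walks and edge creations summing to $O(m)$ per pattern, and deletion handled by reversing the insertion as in Kucherov and Rusinowitch. The only difference is presentational: where the paper simply cites Blumer et al.\ for the $O(m)$ bound on traversed suffix links, you localize that amortization to a single pattern via the telescoping length argument, which is exactly the right (and standard) way to make that step explicit.
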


\begin{proof}
To show the upper bound, we will evaluate the number of edges and suffix links that are traversed and/or created during the update.

Let us first consider the insertion operation.
Let $p$ be a pattern of length $m$ to be inserted to the dictionary.
Suppose that the prefix $p[1:i-1]$ of $p$ has already been inserted
to the DAWG for $1 \leq i \leq m$.
Let $v$ be the DAWG node that represents $p[1:i-1]$.
There are three cases for the next pattern character $p[i]$:
\begin{enumerate}
  \item[(1)] There is a primary out-going edge of $v$ labeled with $p[i]$.
    In this case no new edge or node is created, and it takes $O(\log \sigma)$ time to traverse this primary edge.
  \item[(2)] There is no out-going edge of $v$ labeled with $p[i]$.
    In this case, a new sink node and a new edge from $v$ to this new sink labeled with $p[i]$ are created. Then, the algorithm follows a chain of suffix links from $v$ and insert new edges leading to the new sink labeled with $p[i]$, until finding the first node which has an out-going edge labeled with $p[i]$.
  \item[(3)] There is a secondary out-going edge of $v$ labeled with $p[i]$.
    Let $u$ be the node that is reachable from $v$ via the edge labeled with $p[i]$. This node $u$ gets split into two nodes $u$ and $u'$, and at most $\sigma$ out-going edges of the original node $u$ are copied to $u'$.
\end{enumerate}
It is clear that Case (1) takes $O(\log \sigma)$ time per character.
At most $i$ new edges can be introduced in Case (2),
but it follows from~\cite{blumer1985smallest} that
the total number of suffix links that are traversed is $O(m)$
for all $m$ characters of $p$.
Hence, Case (2) takes $O(\log \sigma)$ amortized time per character.
It is clear that Case (3) takes $O(\sigma)$ time.
Overall, a pattern of length $m$ can be inserted to the DAWG
in $O(\sigma m)$ total time.

The deletion operation can also be performed in $O(\sigma m)$ time,
since the deletion can be done in the same complexity as insertion
by reversing the insertion procedure
(see also Kucherov and Rusinowitch's result~\cite{Kucherov1997129}). 
\end{proof}

Next, we will describe how to find the occurrences of the patterns in the text by using the DAWG.
We will use a nearest marked ancestor (NMA) data structure on
the inverse suffix link tree.
In the NMA problem on a rooted tree,
each node of the tree is either marked or unmarked.
The NMA query returns the nearest marked ancestor of a given query node $v$
in the tree, or returns $\Null$ if $v$ has no marked ancestor.
The semi-dynamic NMA problem allows for marking operation only,
while the dynamic NMA problem allows for both marking and unmarking operations.
New leaves can be added to the tree in both of the problems,
and existing leaves can be removed in the dynamic problem.
There is a semi-dynamic NMA data structure~\cite{westbrook92:_fast_increm_planar_testin} which allows for NMA queries,
marking unmarked nodes, and inserting new leaves in amortized $O(1)$ time each.
For the dynamic NMA problem,
there is a data structure which permits NMA queries and 
both marking and unmarking
operations in worst-case $O(\log t / \log \log t)$ time,
and inserting new leaves in amortized $O(1)$ time,
where $t$ is the size of the tree~\cite{AlstrupHR98,AlstrupHR98tr}.
Both of the data structures use $O(t)$ space
and $O(t)$ preprocessing time.

In our dictionary pattern matching algorithm using the DAWG,
we mark each node $v$ of the inverse suffix link tree 
iff $v$ is a DAWG node that represents a pattern in
the dictionary\footnote{Kucherov and Rusinowitch~\cite{Kucherov1997129} used
Sleator and Tarjan's link-cut tree data structure~\cite{SleatorT83}
to maintain a dynamic forest induced from the inverse suffix link tree.
Our important observation here is that essentially the same operations
and queries in this application can be more efficiently supported
with NMA data structures.}.
Now, for a given node $w$ in the DAWG,
we can find all patterns in the dictionary that are suffixes of $w$ by performing
NMA queries from $w$ on the inverse suffix link tree as follows.
If $w$ itself is marked, then we output it.
Then, we perform NMA queries in the inverse suffix link tree from $w$,
until we find a node that has no marked ancestor.
This allows us to skip all unmarked nodes in the path from $w$
to the node,
and we output all marked nodes found by NMA queries in this path.

Algorithm~\ref{alg:ddmdawg} shows a pseudo-code of our
algorithm for dynamic dictionary matching by using the DAWG.
The algorithm only uses the trunk nodes and primary edges to perform
pattern matching.
Therefore, when the algorithm reads a character $c$ from the text,
it checks whether or not there is a primary edge 
which is labeled with $c$ and leads to a trunk node by using a
function $\chk{\node}{c}$.
If there is no such node, then the algorithm
follows a chain of suffix links until it reaches a trunk node,
and then performs the same procedure as above.
Thus, the suffix links of the DAWG replaces
the failure links of the corresponding AC-automaton.
The correctness is immediately justified by Lemma~\ref{lemma:slink and flink}.
As soon as the algorithm finds a primary edge
which is labeled with $c$ and leads to a trunk node,
it checks whether there is an occurrence of any pattern in the dictionary
by using NMA queries from this destination trunk node, as described previously.
This procedure is used as a substitute for the output function of
the AC-automaton.

Consider inserting a new pattern $p$ to the dictionary.
If $v$ is the DAWG node which represents $p$,
then $v$ is newly marked in the inverse suffix link tree,
and $v$ is the only node that gets marked in this stage.
Hence, exactly one unmarked node gets marked per inserted pattern.
For the same reasoning, exactly one marked node gets unmarked
per deleted pattern.
To delete an existing pattern $p$ from the dictionary
and hence from the DAWG,
we can use Kucherov and Rusinowitch's algorithm~\cite{Kucherov1997129}
which takes $O(\sigma m)$ time due to Lemmas~\ref{lem:DAWGlow}
and~\ref{lem:DAWGup},
where $m$ is the length of $p$.

Overall, we obtain the following.
\begin{theorem}
In the semi-dynamic setting where only insertion of patterns is supported,
the DAWG-based algorithm supports insertion of patterns in 
$O(m \log \sigma)$ time and pattern matching in $O(n \log \sigma + \occ)$ time.

In the dynamic setting where both insertion and deletion of patterns are supported,
the DAWG-based algorithm supports insertion/deletion 
in $O(\sigma m + \log d / \log\log d)$ time
and pattern matching in $O(n (\log d / \log\log d + \log \sigma) + \occ \log d / \log\log d)$ time.
The size of both data structures is $O(d)$.
\end{theorem}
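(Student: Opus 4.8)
The plan is to assemble the theorem from the building blocks already set up in the preceding sections, treating the two settings separately and combining, for each, an update-time bound with a matching-time bound and a space bound.

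For the \emph{semi-dynamic} case, I would first argue the update time. Inserting a pattern $p$ of length $m$ requires: (i) updating $\DAWG{D}$, which by Blumer \etal~\cite{Blumer:1987:CIF:28869.28873} costs $O(m\log\sigma)$ amortized time; (ii) maintaining the inverse suffix link tree, which as shown in Section~\ref{sec:invdawg} costs $O(m\log\sigma)$ time since each node receives at most $\sigma$ inverse suffix links, each accessed/updated in $O(\log\sigma)$; and (iii) marking exactly one node (the node $v$ representing $p$) in the semi-dynamic NMA structure of~\cite{westbrook92:_fast_increm_planar_testin}, which costs amortized $O(1)$, plus the amortized $O(1)$ cost of inserting any new leaves created by the DAWG update. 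Summing gives $O(m\log\sigma)$. Then, for the matching time, I would analyze Algorithm~\ref{alg:ddmdawg}: for each text position we traverse a chain of suffix links (the amortized cost of which is $O(n\log\sigma)$ over the whole text, exactly as in the AC-automaton analysis, using that each primary edge traversal increases depth by one and each suffix link decreases it), test $\chk{\activeNode}{T[i]}$ in $O(\log\sigma)$, and then perform a sequence of NMA queries reporting one occurrence per query; by correctness (Lemma~\ref{lemma:slink and flink}, which guarantees that walking trunk nodes via suffix links mimics the failure function, together with the fact that the marked ancestors of the active trunk node are exactly the dictionary patterns that are suffixes of the current text prefix) this reports each occurrence exactly once, so the NMA queries contribute $O(\occ)$ with amortized $O(1)$ per query. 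Total: $O(n\log\sigma + \occ)$.

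For the \emph{dynamic} case, the update time changes in two places. First, the DAWG update (for both insertion and deletion) now costs $O(\sigma m)$ by Lemmas~\ref{lem:DAWGlow} and~\ref{lem:DAWGup}; the inverse suffix link maintenance is $O(m\log\sigma) = O(\sigma m)$ and is absorbed. Second, the NMA structure must now support unmarking, so we use the dynamic NMA data structure of Alstrup \etal~\cite{AlstrupHR98,AlstrupHR98tr}, where marking/unmarking a single node costs worst-case $O(\log t/\log\log t)$ with $t = O(d)$, i.e. $O(\log d/\log\log d)$; leaf insertions/deletions remain amortized $O(1)$. Again exactly one node is (un)marked per operation, giving update time $O(\sigma m + \log d/\log\log d)$. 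For matching, the suffix-link traversal and $\chk{}{}$ tests still give $O(n\log\sigma)$, but now each NMA query costs $O(\log d/\log\log d)$ instead of $O(1)$: there are $O(n)$ queries that return $\Null$ (one terminating query per text position) and $O(\occ)$ queries that return a marked node, so the NMA cost is $O((n+\occ)\log d/\log\log d)$, and the total is $O(n(\log d/\log\log d + \log\sigma) + \occ\log d/\log\log d)$.

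Finally, space: the DAWG has $O(d)$ nodes, edges, and suffix links~\cite{Blumer:1987:CIF:28869.28873}; the inverse suffix link tree is just the suffix tree of the reversed patterns, also $O(d)$; and both NMA data structures use $O(t) = O(d)$ space. Hence the data structure is $O(d)$ in both settings. I do not anticipate a serious obstacle here — every ingredient is already established earlier in the paper — so the only real care needed is in the matching-time argument: making precise that the suffix-link traversals amortize to $O(n\log\sigma)$ (a potential argument on the depth of $\activeNode$) and that the NMA-query accounting correctly separates the $O(n)$ terminating queries from the $O(\occ)$ reporting ones, which is what produces the asymmetric dependence on $\log d/\log\log d$ in the dynamic matching bound.
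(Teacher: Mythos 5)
Your proposal is correct and follows essentially the same route as the paper: the update bound is assembled from the DAWG-update lemmas (Lemmas~\ref{lem:DAWGlow} and~\ref{lem:DAWGup} for the dynamic case, Blumer~\etal\ for the semi-dynamic case) plus the cost of marking/unmarking exactly one node in the appropriate NMA structure, and the matching bound comes from the AC-style suffix-link traversal together with one NMA query per scanned character and one per reported occurrence. Your write-up merely fills in details (the amortization of suffix-link traversals, the separation of terminating versus reporting NMA queries) that the paper leaves implicit, so no further changes are needed.
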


\begin{proof}
The update times and space requirements
of both of the semi-dynamic and dynamic versions
should be clear from Lemmas~\ref{lem:DAWGlow},~\ref{lem:DAWGup} and the above arguments.

For pattern matching, we need to perform at least one NMA query
each time a character from a text is scanned,
and need to perform an NMA query each time an occurrence of a pattern is found.
Hence, it takes $O(n \log \sigma + \occ)$ time for the semi-dynamic setting
and $O(n (\log d / \log\log d + \log \sigma) + \occ \log d / \log\log d)$ time for the dynamic setting.
\end{proof}

\section{AC-Automaton Update Algorithm}

\begin{figure}[t]
	\centering
	\begin{minipage}[t]{0.49\hsize}
		\centering
		\includegraphics[scale=0.43]{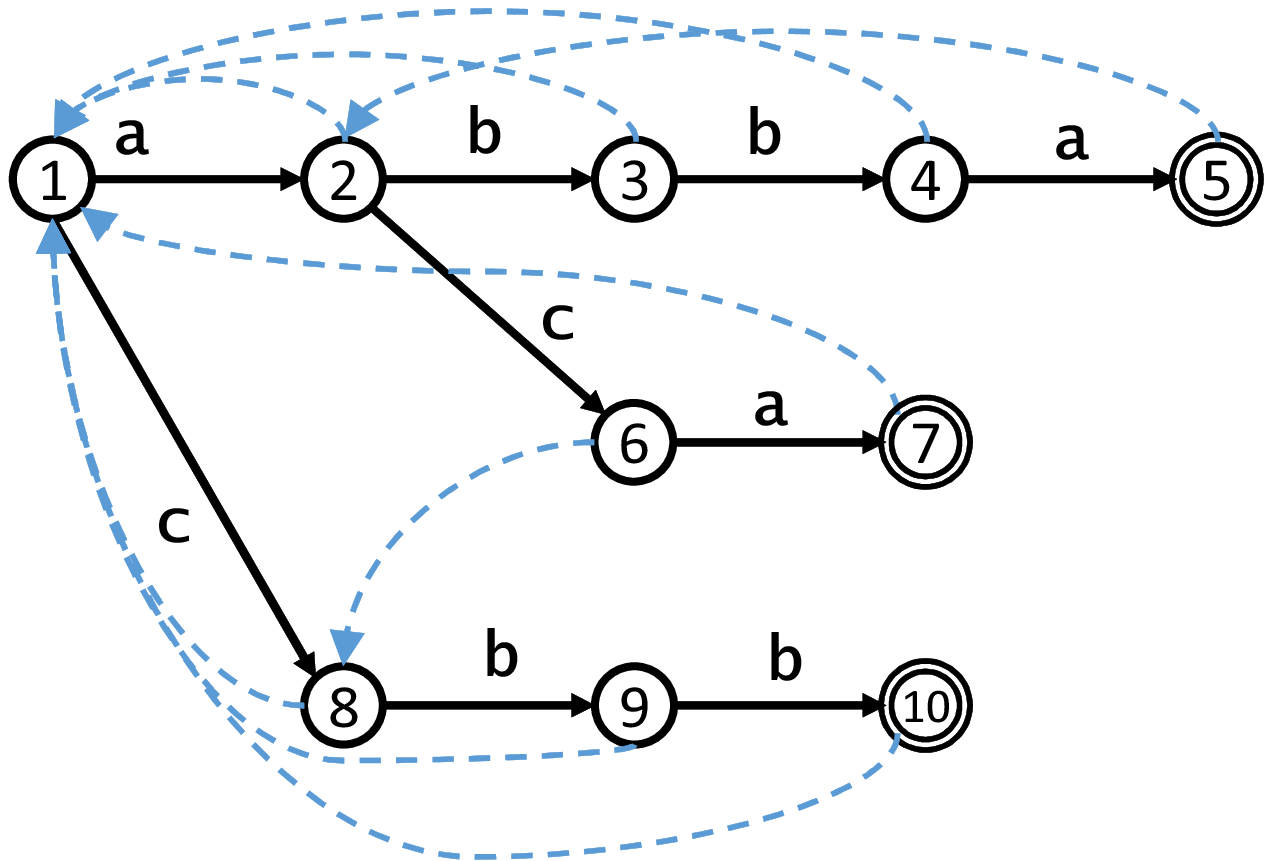}\\
		\ \ \ \scriptsize{(a)}
	\end{minipage}
	\begin{minipage}[t]{0.49\hsize}
		\centering
		\includegraphics[scale=0.43]{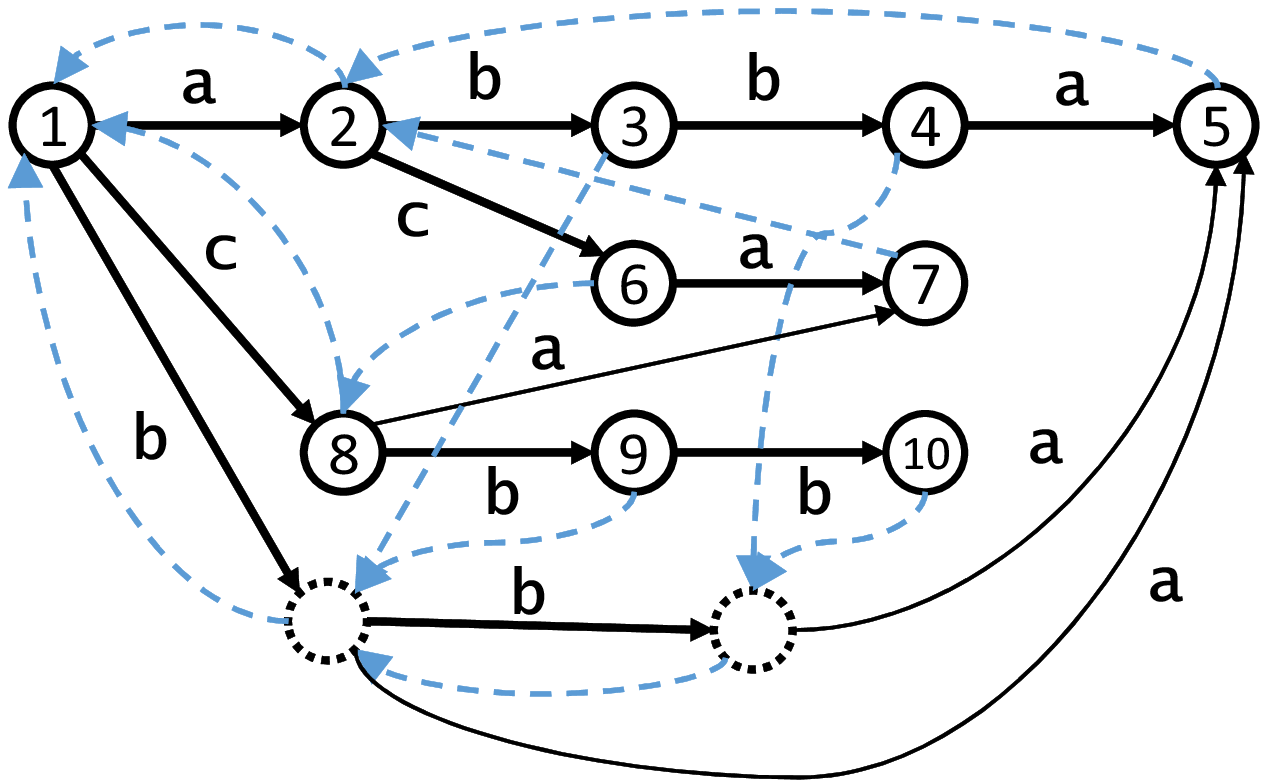}\\
		\ \ \  \scriptsize{(b)}
	\end{minipage}
	\caption{For a dictionary $D=\{{\tt abba},{\tt aca},{\tt cbb}\}$ (a) $\ACautomaton{D}$, and (b) $\DAWG{D}$.}
	\label{fig:example}
	\vspace{2mm}
	\centering
	\begin{minipage}[t]{0.49\hsize}
		\centering
		\includegraphics[scale=0.43]{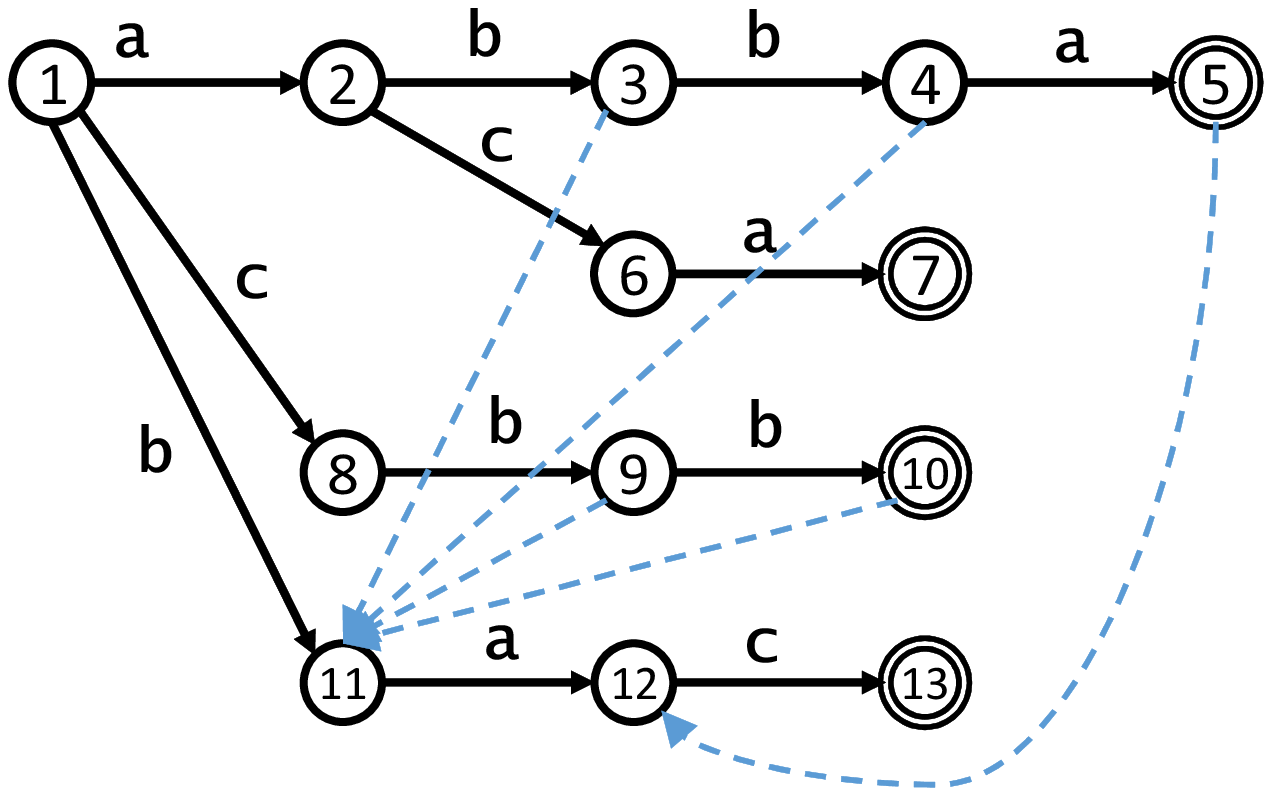}\\
		\ \ \ \scriptsize{(a)}
	\end{minipage}
	\begin{minipage}[t]{0.49\hsize}
		\centering
		\includegraphics[scale=0.43]{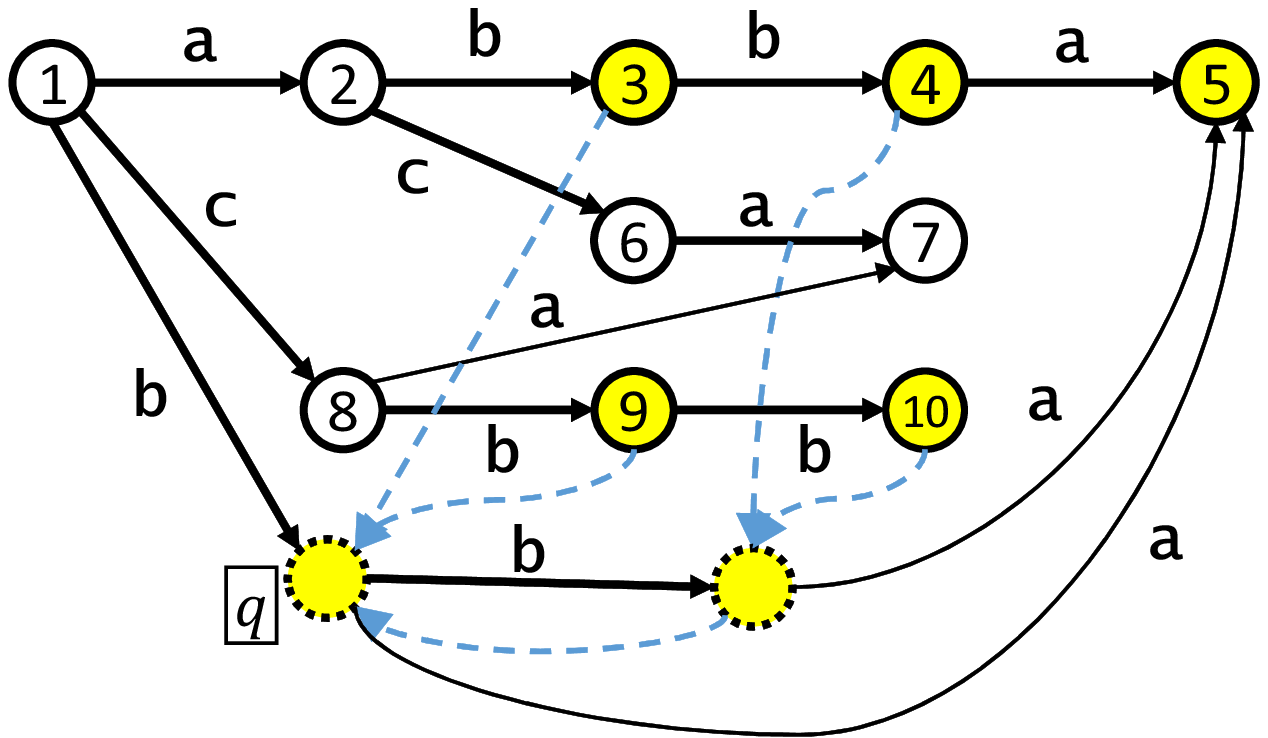}\\
		\ \ \  \scriptsize{(b)}
	\end{minipage}
	\caption{Illustration of updating process when inserting a pattern $p={\tt bac}$ into the dictionary $D=\{{\tt abba},{\tt aca},{\tt cbb}\}$.
		Compare them with Fig.~\ref{fig:example}.
		(a) The updated automaton $\ACautomaton{D'}$, where the only updated failure links are shown.
		(b) In $\DAWG{D}$, the only suffix links that are used for the update are shown,
		and the visited nodes are colored.
		}
	\label{fig:insertion_example}
\end{figure}

In this section we will describe how to perform dynamic dictionary matching
by using the AC-automaton and the DAWG for the dictionary.
Our algorithm performs pattern matching
in exactly the same manner as the original AC-algorithm,
while updating the AC-automaton
dynamically with the aid of the DAWG
upon insertion/deletion of patterns.
We will describe how to modify the AC-automaton by using the DAWG.
Note that we can simulate the AC-automaton with the DAWG augmented with the output function.
However, we will explicitly use the AC-automaton since
it makes the pattern matching algorithm simpler.

\subsection{Pattern insertion algorithm}

\begin{algorithm2e}[t]
\setlength{\baselineskip}{0mm}
\caption{$\getOutStates(p)$}
\label{alg:outstateins}
\KwOut{The states on which the output function should be updated}

$\outStates = \EmptySet$; $\activeNode = \rootNode$\;
\For{$1\leq i \leq m$ $\mathbf{and}$ $\activeNode \neq \Null$ \label{line:begin traverseP}}{
	$\activeNode = \trans(\activeNode,p[i])$\;\label{line:end traverseP}
} 

\If{$\activeNode \neq \Null$}{ \label{line:begin findOut}
	$\queue = \EmptySet$\;
	push $\activeNode$ to $\queue$\;
	\While{$\queue \neq \EmptySet$}{
		pop $\node$ from $\queue$\;
		\If{$\node$ $\mathrm{is\ a\ trunk\ node}$}{
			$\outStates = \outStates \cup \{\AC(\node)\}$
		}
		\For{$\lnode \in \isuf(\node)$}{
			push $\lnode$ to $\queue$\;  \label{line:end findOut}
		}
	}
}
$\mathbf{return}$ $\outStates$\; 
\end{algorithm2e}

\begin{algorithm2e}[t]
\setlength{\baselineskip}{0mm}
\caption{$\getFailStates(p,start)$}
\label{alg:failstateins}
\KwOut{A stack contains states whose failure link should be updated.}

$\stack = \EmptySet$; $\activeNode = \rootNode$\;
\For{$1\leq i \leq m$ $\mathbf{and}$ $\activeNode \neq \Null$ \label{line:fail1}}{
	$\activeNode = \trans(\activeNode,p[i])$\;
	\If{$i \geq \start$ $\mathbf{and}$ $\activeNode \neq \Null$}{
		push $(\activeNode,i)$ to $\stack$\;\label{line:fail2}
	}
}

\While{$\stack \neq \EmptySet$ \label{line:fail3}}{
	pop $(\activeNode,i)$ from $\stack$\;
	$\queue = \EmptySet$\;
	push $\activeNode$ to $\queue$\;
	\While{$\queue \neq \EmptySet$}{
		pop $\node$ from $\queue$\;
		\If{$\node$ $\mathrm{is\ a\ trunk\ node}$}{
			push $(\AC(\node),i)$ to $\failStates$\;
		}
		\If{$\node$ $\mathrm{is\ not\ marked}$}{
			mark $node$\;
			\If{$\node$ $\mathrm{is\ a\ branch\ node}$}{
				\For{$\lnode \in \isuf(\node)$}{
					push $\lnode$ to $\queue$\;\label{line:fail4}
				}
			}
		}
	}
}

$\mathbf{return}$ $\failStates$\;
\end{algorithm2e}

\begin{algorithm2e}
	\setlength{\baselineskip}{0mm}
	\caption{Pattern insertion algorithm of AC-automaton}
	\label{alg:insertion}
	\KwIn{new pattern $p$}
	\SetKwFor{Fn}{Function}{}
	
	$\activeState = \rootState$\;
	$\newStatesSet = \EmptySet$\;
	$\weight{\activeState} = \weight{\activeState} + 1$\;
	\For{$1\leq i \leq m$}{
		\lIf{$\goto{\activeState}{p[i]} \neq \fail$}{
			$\activeState = \goto{\activeState}{p[i]}$}
		\Else{
			create $\newState$\;
			$\goto{\activeState}{p[i]} = \newState$\;
			$\activeState = \newState$\;
			$\newStatesSet = \newStatesSet \cup \{\newState\}$\;
		}
		$\weight{\activeState} = \weight{\activeState} + 1$\;
		\If{i = m}{
			$\out(newState) = \out(\newState) \cup \{p\}$\;
		}
	} 
	
	$\failStates = \underline{\getFailStates}(p,m - |\newStatesSet| + 1)$\;
	\While{$\failStates \neq \EmptySet$}{
		pop $(\state,i)$ from $\failStates$\;
		$\failure(\state) = \newStatesSet[i-|\newStatesSet| + 1]$\;
	}

	$\activeState = \rootState$\;
	\For{$1\leq i \leq m$}{
		\If{$\goto{\activeState}{p[i]} \in \newStatesSet$}{
			$\failureState = \failure(\activeState)$\;
			\While{$goto(\failureState,p[i]) = \fail$}{
				$\failureState = \failure(\failureState)$\;
			}
			$\activeState = \goto{\activeState}{p[i]}$\;
			$\failure(\activeState) = \failureState$\;
			$\out(\activeState) = \out(\activeState) \cup \out(\failureState)$\;
		}
		\Else{
			$\activeState = \goto{\activeState}{p[i]}$\;
		}
	}
	$\outStates = \underline{\getOutStates}(p)$\;
	\lFor{$\state \in \outStates$}{
		$\out(\state) = \out(\state) \cup \{p\}$
	}
	
\end{algorithm2e}

We consider inserting a new pattern $p$ of length $m$ into the dictionary $D$,
and we denote the new dictionary by $D'= D \cup \{p\} = \{p_1,p_2,\ldots,p_r,p\}$.
It is known that $\DAWG{D}$ can be constructed in $O(d\log\sigma)$ time,
and can be updated to $\DAWG{D'}$ online in $O(m\log\sigma)$
amortized time~\cite{Blumer:1987:CIF:28869.28873}.
We update $\ACautomaton{D}$ to $\ACautomaton{D'}$ by using $\DAWG{D}$,
and then update $\DAWG{D}$ to $\DAWG{D'}$.
We also add $\weight{v}$ to each state $v$ of $\ACautomaton{D}$
that is the number of occurrences $v$ as prefix in $D$.
We will use $\weight{v}$ as a reference counter to determine whether $v$ should be deleted or not in the deletion algorithm later.

The key point of our algorithm is to update the output and failure functions of $\ACautomaton{D}$
in linear time with respect to the number of states that should be modified.
The $\mathit{goto}$ function can be updated easily by adding a new transition for a new state in the same way as in the AC-automaton construction algorithm.
We then update the output and failure functions efficiently by using inverse suffix links of $\DAWG{D}$.
Algorithm~\ref{alg:insertion} updates $\ACautomaton{D}$ when a new pattern is inserted to $D$, and
Algorithms~\ref{alg:outstateins} and~\ref{alg:failstateins} find the states on which the output and failure functions should be updated, respectively.

For any node $v$ in $\DAWG{D}$,
let $\isuf(v) = \{ x \mid \suf(x)=v \}$ be the set of its inverse suffix links.
The set $\isuf(v)$ for each $v$ is stored in an ordered array $v_a$ as described in Section~\ref{sec:invdawg}.
For the new pattern $p$,
we can divide $p$ to $p = xyz$ and categorize the prefixes of $p$ into three categories, so that for any $i,j,k$ with $1 \leq i \leq |x| < j \leq |x| + |y| < k \leq m$;
\begin{enumerate}
	\item $p[1:i]$ exists both in $\ACautomaton{D}$ and $\DAWG{D}$,
	\item $p[1:j]$ does not exist in $\ACautomaton{D}$ but exists in $\DAWG{D}$, and
	\item $p[1:k]$ exists in neither $\ACautomaton{D}$ nor $\DAWG{D}$.
\end{enumerate}
To update both output and failure functions of $\ACautomaton{D}$ to $\ACautomaton{D'}$ we only use nodes in $\DAWG{D}$ that represent prefixes in the second category.
Algorithm~\ref{alg:outstateins} follows inverse suffix links of a node representing $p$ recursively in $\DAWG{D}$,
in order to find all the states in $\ACautomaton{D}$ on which the output function needs to be updated.
On the other hand, Algorithm~\ref{alg:failstateins} follows inverse suffix links of nodes that represent $p[i:j]$ for $|x| < j \leq |x| + |y|$ (category 2) recursively,
until it reaches a trunk node $u$, and then saves the state $s = \pi(u)$ that corresponds to the trunk node to update its failure link later.

Fig.~\ref{fig:insertion_example} illustrates an example, 
where we insert a pattern $p={\tt bac}$ into the dictionary $D=\{{\tt abba},{\tt aca},{\tt cbb}\}$.
First, we create new states $11$, $12$, and $13$.
The string ${\tt b}$ is represented by node $q$ in $\DAWG{D}$, and by the new state $11$ in $\ACautomaton{D'}$,
thus there is at least one state whose failure link should be updated to point at the state $11$.
We will explain how to find these states below.
Similarly, we know that at least one failure link should be updated to point at the state $12$,
because the string ${\tt ba}$ represented by the state $12$ in $\ACautomaton{D'}$ is also represented by node $5$ in $\DAWG{D}$.
However, the string ${\tt bac}$, which is represented by the new state $13$, is not represented in $\DAWG{D}$,
thus we know that there is no state whose failure link should be updated to state $13$.
As a result, we have the set $\{11, 12\}$ of states.
(Lines~\ref{line:fail1}--\ref{line:fail2} in Algorithm~\ref{alg:failstateins})

We now explain how to find states whose failure links should be updated.
We begin by the deepest state in $\{11,12\}$, that is, state $12$.
We search the states from node $5$ in $\DAWG{D}$, which represents the same string ${\tt ba}$ as state $12$ in $\ACautomaton{D'}$.
When searching from node $5$, we do not search further because node $5$ is a trunk node.
Therefore, we update the failure link of state $5$ to state $12$.
Next, to find states whose failure links should be updated to state $11$,
we search the states from node $q$ in $\DAWG{D}$, which represents the same string ${\tt b}$ as state $11$ in $\ACautomaton{D'}$.
By following the inverse suffix links recursively from node $q$ until reaching a trunk node,
we get the set $\{3,4,9,10\}$ of trunk nodes (see Fig.~\ref{fig:insertion_example} (b)).
Therefore, we update the failure links of states $3$, $4$, $9$, and $10$ to state $11$.
(Lines~\ref{line:fail3}--\ref{line:fail4})

\subsection{Pattern deletion algorithm}

\begin{algorithm2e}[t]
\setlength{\baselineskip}{0mm}
\caption{Pattern deletion algorithm of AC-automaton}
\label{alg:deletion}
\KwIn{A pattern $p$}
\SetKwFor{Fn}{Function}{}

$\activeState = \rootState$\;
$\deleteStatesSet = \EmptySet$\;
\For{$1\leq i \leq m$}{
	$\activeState = \goto{\activeState}{p[i]}$\;
	\If{$\weight{\activeState} = 0$}{
		$\deleteStatesSet = \deleteStatesSet \cup \activeState$\;
	}
	\lElse{
		$\weight{\activeState} = \weight{\activeState} - 1$}
}

$\failStates = \underline{\getFailStates}(p,m - |\deleteStatesSet| + 1)$\;
\While{$\failStates \neq \EmptySet$}{
	pop $(\state,i)$ from $\failStates$\;
	$\failure(\state) = \failure(\newStatesSet[i-(m-|\newStatesSet|) + 1])$\;
}

$\activeState = \rootState$\;
$\outStates = \underline{\getOutStates}(p)$\;
\lFor{$\state \in \outStates$}{
	$\out(\state) = \out(\state) \setminus \{p\}$}

\For{$\state \in \deleteStatesSet$}{
		delete $\state$\;
}

\end{algorithm2e}

We consider deleting a pattern $p_i$ of length $m$ from the dictionary $D$,
and we denote the new dictionary by $D'= D \setminus \{p_i\} = \{p_1,\ldots,p_{i-1},p_{i+1},\ldots,p_r\}$.
Similarly to insertion, we can delete a pattern from $\DAWG{D}$ in $O(m\log\sigma)$ time~\cite{Kucherov1997129}.
We update $\ACautomaton{D}$ to $\ACautomaton{D'}$ by using $\DAWG{D}$,
and then update $\DAWG{D}$ to $\DAWG{D'}$.
The proposed deletion algorithm also updates the output and failure functions of $\ACautomaton{D}$
in linear time with respect to the number of states that should be modified.

Algorithm~\ref{alg:deletion} shows the proposed deletion algorithm.
First, the algorithm finds which states should be deleted.
The algorithm finds the states by decreasing the weight of states which represent prefixes of $p$.
The algorithm will delete the states whose weight becomes zero,
which means those states do not represent any prefix of any pattern in $D'$.

After the algorithm has found the states which should be deleted,
it will update the states whose failure links should be updated.
A state should be updated if the failure link of the state is pointing at one of the nodes that will be deleted.
Such states can be found by traversing reverse failure links of the states.
From Lemma~\ref{lemma:slink and flink} we can use inverse suffix links of the DAWG instead of inverse failure links of the AC-automaton to find the states.
The algorithm uses $\getFailStates(p,start)$ in Algorithm~\ref{alg:failstateins} to find the states and
update them from the states of which the suffix links point to shallower states.

Next, the algorithm will update the output function of the AC-automaton.
The output function of a state should be updated if and only if $p$ is a suffix of the string that is represented by the state.
The algorithm uses $\getOutStates(p)$ in Algorithm~\ref{alg:outstateins} to find the states whose output function should be updated.
Last, the algorithm will delete the respective states.

\subsection{Correctness of the algorithms}

We now show the correctness of Algorithms~\ref{alg:outstateins} and \ref{alg:failstateins}.

\begin{lemma}
	\label{lem:outstate}
	Algorithm~\ref{alg:outstateins} \emph{correctly} returns the set of states on which output functions should be updated.
\end{lemma}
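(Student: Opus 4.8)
The plan is to show that Algorithm~\ref{alg:outstateins} returns exactly the set $\{\AC(u) \mid u \text{ is a trunk node of } \DAWG{D},\ p \in \out(\AC(u)) \text{ in } \ACautomaton{D'}\}$, i.e., the states $s$ of $\ACautomaton{D'}$ such that $p$ is a suffix of the string spelled by $s$ and $s$ already existed in $\ACautomaton{D}$ (the newly created states receive $p$ in their output sets directly in Algorithm~\ref{alg:insertion}, so they are not the concern here). First I would handle the initial traversal on lines~\ref{line:begin traverseP}--\ref{line:end traverseP}: after this loop, $\activeNode$ is the node of $\DAWG{D}$ representing $p$ if $p \in \Substr{D}$, and $\Null$ otherwise. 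When $\activeNode = \Null$, the string $p$ is not a substring of any pattern in $D$, hence cannot be a proper suffix of any prefix of a pattern in $D$, so no \emph{old} state needs its output updated, and returning $\EmptySet$ is correct.

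Next I would analyze the BFS on lines~\ref{line:begin findOut}--\ref{line:end findOut}. The key fact, stated in Section~\ref{sec:invdawg}, is that the inverse suffix links of $\DAWG{D}$ form the suffix tree of the reversed patterns; equivalently, following inverse suffix links from the node $[p]_D$ visits exactly the nodes $[w]_D$ such that $p$ is a suffix of $w$ and $w \in \Substr{D}$. So the queue enumerates precisely the equivalence classes of substrings of $D$ having $p$ as a suffix. Among these, a node $\node$ is a trunk node iff its (longest) represented string is a prefix of some pattern, i.e., iff it corresponds to a state $\AC(\node)$ of $\ACautomaton{D}$; for such a state, $p$ is a suffix of that state's string, so $p$ must be added to its output set. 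Conversely, any state $s$ of $\ACautomaton{D}$ (a trunk node $v = \AC^{-1}(s)$) with $p$ a suffix of $s$ has $[s]_D$ reachable from $[p]_D$ via inverse suffix links, hence is visited. This establishes that $\outStates$ is exactly the desired set. I also need a small remark that every node with $p$ as a suffix lies in the subtree rooted at $[p]_D$ in the inverse-suffix-link tree, so the BFS does not miss any and does not need to consult nodes outside that subtree.

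The main obstacle I anticipate is the bookkeeping around equivalence classes: a state $s$ of $\ACautomaton{D}$ corresponds to a trunk \emph{node} $[s]_D$ only when $s$ is the \emph{longest} string in its class; shorter strings in the same class share the node but are not themselves states. So I must argue that "$p$ is a suffix of the string of state $s$" is correctly detected by reaching the node $[s]_D$ via inverse suffix links from $[p]_D$ — which holds because $\ePos{p}{D} \supseteq \ePos{s}{D}$ is equivalent to $p$ being a suffix of every string in $[s]_D$, and the inverse-suffix-link tree structure (being the reversed suffix tree) captures exactly the "is-a-suffix-of" relation among these classes. Once that correspondence is pinned down, and combined with the observation (made in the body) that exactly one new state per inserted pattern is involved and that the new states are handled separately, correctness follows. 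A brief mention that the argument for deletion is symmetric — the same set of states needs $p$ removed from its output — closes the loop with Algorithm~\ref{alg:deletion}.
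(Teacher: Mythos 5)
Your proposal is correct and follows essentially the same route as the paper: identify the states to update as exactly those whose string has $p$ as a suffix, dispose of the case $p \notin \Substr{D}$, and argue that these states are precisely the trunk nodes reachable from $[p]_D$ by inverse suffix links (the paper phrases this reachability via Lemma~\ref{lemma:slink and flink}, you via the reversed-suffix-tree structure of the inverse suffix links, which is the same underlying fact). Your extra care about longest members of equivalence classes is a sound refinement of the paper's terser argument, not a different approach.
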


\begin{proof}
	When a new pattern $p$ is inserted to a dictionary $D$, we have to update the output function of every state $s$ in $\ACautomaton{D}$ 
	such that $p$ is a suffix of the string $s$.
	If there is no node in $\DAWG{D}$ representing $p$, we know that no such a string $s$ exists in $D$.
	Otherwise, let $s_p$ be a new state created in $\ACautomaton{D'}$ to represent the pattern $p$.
	The output function of some state $s$ should be updated if and only if $s_p$ is reachable from $s$ via a chain of failure links.
	From Lemma~\ref{lemma:slink and flink}, for nodes $u=\AC^{-1}(s)$ and $v_p=[p]_D$, we have $v_p = \suf^i(u)$ for some $i$.
	Therefore, $s=\AC(u)$ can be found by following inverse suffix links from $v_p$ recursively.
\end{proof}

\begin{lemma}
	\label{lem:failstate}
	Algorithm~\ref{alg:failstateins} correctly returns the set of states whose failure links should be updated.
\end{lemma}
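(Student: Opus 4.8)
The plan is to reduce the statement to a combinatorial description of the pairs that should be returned, and then to check that Algorithm~\ref{alg:failstateins} produces them. \emph{First} I would determine which states change and to what. Fix a state $s$ of $\ACautomaton{D}$, so $s\in\PrefixSet{D}$ and every suffix of $s$ lies in $\Substr{D}$. In $\ACautomaton{D'}$ the new value of $\failure(s)$ is the longest proper suffix of $s$ in $\PrefixSet{D'}=\PrefixSet{D}\cup\{p[1:1],\dots,p[1:m]\}$; prefixes of $p$ in category~3 are not in $\Substr{D}$ and hence not suffixes of $s$, while prefixes in category~1 already lie in $\PrefixSet{D}$. So $\failure(s)$ changes iff some category-2 prefix of $p$ is a proper suffix of $s$ and the longest such prefix, say $p[1:j^\star]$, is strictly longer than the old $\failure(s)$; and then the correct new target is the state newly created for $p[1:j^\star]$. (This longest prefix is unique, and cannot have the same length as $\failure(s)$, since suffixes of $s$ of equal length are equal while a category-2 prefix is not in $\PrefixSet{D}$.) Hence it suffices to show that the returned pairs have exactly the right first coordinates, and that for each returned first coordinate $s$ the pair carrying the correct target $j^\star$ is present (and pushed before any other pair with first coordinate $s$).

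\emph{Second} I would transfer this to $\DAWG{D}$. Let $u=\AC^{-1}(s)$. One checks that $s$ is the longest string in the class $u$ (a longer string in a class that also contains a pattern-prefix would have to occur ending where that prefix does, hence starting before its pattern), so the proper suffixes of $s$ are precisely the strings in $\suf^{1}(u),\suf^{2}(u),\dots$ together with the strings of $[u]_D$ other than $s$; consequently a category-2 prefix $p[1:j]$ is a proper suffix of $s$ exactly when its DAWG node $v_j=[p[1:j]]_D$ equals $\suf^{k}(u)$ for some $k\ge 0$. Since $\DAWG{D}$ is acyclic, the nodes visited while spelling $p$ from the source are pairwise distinct, so distinct $j$'s give distinct $v_j$'s and longer prefixes of $p$ sit nearer $u$ on the suffix-link path. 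Combining this with Lemma~\ref{lemma:slink and flink}, which gives $\failure(s)=\AC(w_0)$ for $w_0$ the nearest trunk node among $\suf^1(u),\suf^2(u),\dots$, the condition above becomes: $s$ should be updated, to $j^\star$, exactly when $v_{j^\star}$ is the nearest node to $u$ via suffix links (possibly $v_{j^\star}=u$) that is a category-2 node, and no trunk node is strictly nearer $u$ than $v_{j^\star}$; here $u$ itself, although trunk, does not count as ``nearer'', since $\AC(u)=s$ is not a proper suffix of $s$.

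\emph{Third} I would check that Algorithm~\ref{alg:failstateins} realizes this. The first loop (lines~\ref{line:fail1}--\ref{line:fail2}) spells $p$ from the source of $\DAWG{D}$, and since $\start=m-|\newStatesSet|+1=|x|+1$ and $\trans$ returns $\Null$ exactly at the first category-3 character, it pushes onto $\stack$ precisely the category-2 nodes $v_{|x|+1},\dots,v_{|x|+|y|}$ in increasing order of depth. The second phase (lines~\ref{line:fail3}--\ref{line:fail4}) pops them deepest-first and, from each popped $v_j$, traverses the inverse-suffix-link tree, emitting $(\AC(w),j)$ at each trunk node $w$ it pops, never descending below a trunk node or a marked node, and marking every node it pops. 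Using the deepest-first order together with marking, I would argue that the traversal from $v_{j^\star}$ either emits $(s,j^\star)$ directly (if $v_{j^\star}=u$) or reaches $u$ and emits $(s,j^\star)$ (if $v_{j^\star}=\suf^{k}(u)$ with $k\ge 1$; then $v_{j^\star}$ lies strictly below $w_0$, so it and all intermediate nodes toward $u$ are non-trunk, and these intermediate nodes are still unmarked when $v_{j^\star}$ is processed), so the correct pair is always emitted. A traversal from $v_j$ can reach $u$ only when $v_j$ is $u$ or an ancestor of $u$ in that tree; and if such a $v_j$ has $j<j^\star$ then the already-marked $v_{j^\star}$ lies on the path to $u$, so the traversal stops before expanding past it — hence every emitted pair with first coordinate $s$ other than $(s,j^\star)$ has a smaller second coordinate and is emitted later. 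As Algorithm~\ref{alg:insertion} consumes $\failStates$ last-in-first-out, the final assignment to $\failure(s)$ is the one carried by $(s,j^\star)$, as required, and $s$ appears at all iff $\failure(s)$ genuinely changes.

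The main obstacle is exactly this third step: proving that ``process deepest-first, mark each popped node, stop below trunk and marked nodes'' faithfully implements ``nearest trunk-or-category-2 ancestor'' once the caller's last-in-first-out consumption is also accounted for. The care is needed because a trunk node can be emitted more than once — e.g.\ when $v_j$ is simultaneously a trunk node and a category-2 node, the pair $(\AC(v_j),j)$ it emits directly may coexist with a pair $(\AC(v_j),j')$ emitted for the same state by a later traversal — so the argument must track how the successive traversals carve the relevant part of the inverse-suffix-link tree into disjoint regions and verify that the correct, largest-$j$ pair is always emitted first. One must also handle the subcase where $v_{j^\star}$ is itself a trunk node (then it is also the first trunk node above $u$, $\failure(s)$ does not change, and nothing is emitted for $s$). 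The index arithmetic linking $j$ to the slot of $\newStatesSet$ used by Algorithm~\ref{alg:insertion} is routine and I would relegate it to that algorithm's correctness.
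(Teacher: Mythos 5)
Your proposal is correct and rests on the same two pillars as the paper's own proof --- Lemma~\ref{lemma:slink and flink} and the fact that the traversal of the inverse-suffix-link tree stops at trunk nodes --- but it is considerably more thorough and in places proves more than the paper's lemma claims. The paper only argues membership in the returned \emph{set} of states: completeness is dispatched in one sentence by analogy with the proof of Lemma~\ref{lem:outstate}, and soundness by noting that if $s$ should not be updated then the trunk node $\AC^{-1}(\failure(s))$ lies (weakly) between the search root and $\AC^{-1}(s)$ and therefore blocks the recursive search; it never discusses the marking of visited nodes, the deepest-first stack order, or which index is ultimately assigned by Algorithm~\ref{alg:insertion}. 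Your third step supplies exactly these missing verifications: that the nodes between $v_{j^\star}$ and $u$ are non-trunk \emph{and still unmarked} when $v_{j^\star}$ is processed (because an earlier-processed, deeper root cannot be an ancestor of $u$ without contradicting the maximality of $j^\star$), and that the last-in-first-out consumption of $\failStates$ makes the largest-$j$ pair the one that survives even when a state is emitted with two different indices --- points the paper leaves implicit, so your argument buys a genuinely stronger guarantee (correct new targets, not just the correct set). One small nit: the ``exactly when'' characterization closing your second step is off in the corner case where the nearest category-2 node $v_{j^\star}$ at distance at least one from $u$ is itself a trunk node (then $\failure(s)$ does not change and nothing should be, or is, emitted for $s$); you catch and resolve this subcase in your final paragraph, but the biconditional should be stated with that exclusion built in.
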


\begin{proof}
	By arguments similar to the proof of Lemma~\ref{lem:outstate},
	all the states that should be updated are reachable via chains of inverse suffix links from the nodes in $\DAWG{D}$ that correspond to the new states in $\ACautomaton{D'}$.
	Next, we will show that Algorithm~\ref{alg:failstateins} only returns the set $S$ of the states that should be updated.
	Let $x$ be a new state and 
	$t=[x]_D$ be a node that represents the string $x$.
	Assume that $S$ contains a state $s$ that can be reached by following inverse failure links from $x$ recursively
	but should not be updated.
	Let $u = \AC^{-1}(s)$ and $v = \AC^{-1}(\failure(s))$ be trunk nodes in $\DAWG{D}$ corresponding to $s$ and $\failure(s)$, respectively.
	From Lemma~\ref{lemma:slink and flink},
	$v = \suf^i(u)$ and $t = \suf^j(v)$ for some $i$ and $j$. 
	Since Algorithm~\ref{alg:failstateins}, started from $t$, stops a recursive search after reaching a trunk node ($v$ in this case),
	it would not find $u$.
	Therefore, $s=\AC(u) \not\in S$.
\end{proof}


\section{Algorithm Complexity Analysis}

We now show the time complexity of Algorithms~\ref{alg:outstateins} and \ref{alg:failstateins}.

\begin{lemma}[\cite{blumer1985smallest}] \label{lem:lognest member of [x]_D}
	A string $x \in \Substr{D}$ is the longest member of $[x]_D$ if and only if either $x \in \PrefixSet{D}$ or $ax, bx \in \Substr{D}$ for some distinct $a, b \in \Sigma$. 
\end{lemma}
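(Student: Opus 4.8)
The plan is to argue directly from the definition of $\equiv_D$ via end‑position sets, using repeatedly the elementary monotonicity fact that $\ePos{wx}{D}\subseteq\ePos{x}{D}$ for every string $w$ (each occurrence of $wx$ ending at $(i,j)$ gives an occurrence of $x$ ending at $(i,j)$). First I would reduce the claim to a one‑symbol condition. If $y\equiv_D x$ with $|y|>|x|$, then choosing any $(i,j)\in\ePos{y}{D}=\ePos{x}{D}$ shows that both $x$ and $y$ are suffixes of $p_i$ ending at position $j$, so $x$ is a proper suffix of $y$; hence $y=wx$ for some nonempty $w$. Writing $c=w[|w|]$, monotonicity gives $\ePos{wx}{D}\subseteq\ePos{cx}{D}\subseteq\ePos{x}{D}$, so $\ePos{wx}{D}=\ePos{x}{D}$ forces $\ePos{cx}{D}=\ePos{x}{D}$. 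Therefore $x$ is the longest member of $[x]_D$ if and only if $\ePos{cx}{D}\subsetneq\ePos{x}{D}$ for every $c\in\Sigma$ (an empty set counting as a proper subset, since $x\in\Substr{D}$ gives $\ePos{x}{D}\neq\emptyset$).

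Using this criterion I would prove $(\Longleftarrow)$. If $x\in\PrefixSet{D}$, then $(i,|x|)\in\ePos{x}{D}$ for some $p_i$, and this pair never lies in $\ePos{cx}{D}$, which requires an ending position at least $|x|+1$; so the criterion holds. If instead $ax,bx\in\Substr{D}$ with $a\neq b$, fix $(i,j)\in\ePos{ax}{D}$ and $(i',j')\in\ePos{bx}{D}$, so that the symbol of $p_i$ immediately to the left of this occurrence of $x$ is $a$, and that of $p_{i'}$ is $b$. For any $c\in\Sigma$, at least one of $c\neq a$ and $c\neq b$ holds; in the former case $(i,j)\in\ePos{x}{D}\setminus\ePos{cx}{D}$, in the latter $(i',j')\in\ePos{x}{D}\setminus\ePos{cx}{D}$. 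Hence $\ePos{cx}{D}\subsetneq\ePos{x}{D}$ for all $c$, so $x$ is the longest member of $[x]_D$.

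For $(\Longrightarrow)$ I would argue contrapositively from the same criterion: assume $x$ is the longest member but $x\notin\PrefixSet{D}$. Then every $(i,j)\in\ePos{x}{D}$ satisfies $j\geq|x|+1$ (otherwise $x=p_i[1:|x|]$ would be a prefix of $p_i$), so each occurrence of $x$ has a well‑defined left symbol $p_i[j-|x|]$. Since $\ePos{x}{D}\neq\emptyset$, pick one occurrence and let $a$ be its left symbol, so $ax\in\Substr{D}$. If every occurrence of $x$ had left symbol $a$, then every element of $\ePos{x}{D}$ would extend to an occurrence of $ax$, giving $\ePos{ax}{D}=\ePos{x}{D}$ and contradicting the criterion (with $c=a$); hence some occurrence has left symbol $b\neq a$, so $bx\in\Substr{D}$ as well, which is what we wanted. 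The work here is not conceptual but index bookkeeping — in particular checking that ``$x\notin\PrefixSet{D}$'' is exactly ``no occurrence of $x$ in $D$ ends at position $|x|$'' and that left‑extension of occurrences is faithfully tracked by the inclusion $\ePos{cx}{D}\subseteq\ePos{x}{D}$; once the one‑symbol reduction of the first paragraph is in place, the remainder is the short case analysis above.
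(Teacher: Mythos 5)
Your proof is correct. Note, though, that the paper does not prove this lemma at all: it is imported verbatim from Blumer et al.~\cite{blumer1985smallest}, so there is no in-paper argument to compare against. What you have done is reconstruct, from the paper's own definition of $\ePos{x}{D}$, the standard argument behind that cited fact: reduce ``$x$ is the longest member of $[x]_D$'' to the one-symbol criterion that $\ePos{cx}{D}\subsetneq\ePos{x}{D}$ for every $c\in\Sigma$ (using that all members of a class are suffixes of one another and that $\ePos{wx}{D}\subseteq\ePos{cx}{D}\subseteq\ePos{x}{D}$ for $c$ the last symbol of $w$), and then do the case analysis on left extensions. Both directions check out, including the index bookkeeping: a prefix occurrence $(i,|x|)$ can never lie in any $\ePos{cx}{D}$, and if $x\notin\PrefixSet{D}$ every occurrence has a left symbol, so a unique left symbol $a$ would force $\ePos{ax}{D}=\ePos{x}{D}$. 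One cosmetic remark: in your first paragraph you state the one-symbol criterion as an equivalence but explicitly argue only the direction ``not longest $\Rightarrow$ some $c$ with $\ePos{cx}{D}=\ePos{x}{D}$''; the converse (which you invoke in the $(\Longrightarrow)$ case) is the one-line observation that $\ePos{cx}{D}=\ePos{x}{D}\neq\emptyset$ makes $cx$ a strictly longer member of $[x]_D$ --- worth saying explicitly, but not a gap of substance.
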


\begin{lemma}
\label{lem:suffixlink}
	For any non-trunk node in DAWG, there exist at least two suffix links that point at it.
\end{lemma}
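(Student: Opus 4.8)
The plan is to exhibit two distinct nodes of $\DAWG{D}$ whose suffix links both point at the given non-trunk node $v$, and to distinguish these two nodes by the \emph{labels} of their suffix links. First I would let $x$ be the longest string in the equivalence class that $v$ represents. Since $v$ is a non-trunk node, no member of this class — and in particular $x$ — is a prefix of a pattern in $D$, i.e.\ $x \notin \PrefixSet{D}$. Applying Lemma~\ref{lem:lognest member of [x]_D} to the longest member $x$, there must be two distinct symbols $a, b \in \Sigma$ with $ax, bx \in \Substr{D}$. A short argument with $\ePos{\cdot}{D}$ shows $ax \not\equiv_D x$ (otherwise every occurrence of $x$ in $D$ would be preceded by $a$, contradicting $bx\in\Substr{D}$ with $b\neq a$), and symmetrically $bx \not\equiv_D x$; hence $[ax]_D$ and $[bx]_D$ are genuine nodes of $\DAWG{D}$ different from $v$.

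Next I would trace the suffix-link path from $[ax]_D$ towards the source. Writing $w_a$ for the longest member of $[ax]_D$, the string $ax$ is a suffix of $w_a$ (members of a DAWG equivalence class form a chain under the suffix relation), hence so is $x$. By the standard characterization of suffix-link chains in DAWGs (Blumer~\etal~\cite{blumer1985smallest}), the nodes met along this path are precisely those whose longest members are the suffixes of $w_a$ that are themselves longest members of their own classes; since $x$ is the longest member of $[x]_D = v$ and is a suffix of $w_a$, the node $v$ lies on this path, and because $v \neq [ax]_D$ it has an immediate predecessor $v_a$ on it, so $\suf(v_a)=v$. An identical argument with $b$ produces a node $v_b$ with $\suf(v_b)=v$.

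Finally I would compute the labels of these two suffix links. The longest member $M$ of $v_a$ satisfies $M = \alpha x$ with $\alpha \neq \varepsilon$ (because $|M| > |x|$), and $M$ is a suffix of $w_a$ of length at least $|x|+1$; as the length-$(|x|+1)$ suffix of $w_a$ is $ax$, the string $M$ ends with $ax$, so $\alpha$ ends with $a$ and, by the definition of suffix-link labels, the suffix link $v_a \to v$ is labeled $a$. Likewise the suffix link $v_b \to v$ is labeled $b$. Since $a \neq b$ and a node admits at most one incoming suffix link per label, $v_a$ and $v_b$ must be distinct, so at least two suffix links point at $v$. The step I expect to be the main obstacle is the second one: making precise that $v$ is actually visited by the suffix-link path from $[ax]_D$ (rather than jumped over) and that the final edge into $v$ carries exactly the symbol $a$; this rests on the characterization of which classes occur on a suffix-link chain together with a careful length comparison, while the rest is routine.
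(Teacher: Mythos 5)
Your argument is correct, and it starts exactly as the paper does: take the longest member $x$ of the non-trunk class $v$, note $x\notin\PrefixSet{D}$, and invoke Lemma~\ref{lem:lognest member of [x]_D} to obtain distinct $a,b\in\Sigma$ with $ax,bx\in\Substr{D}$. Where you differ is the finish. The paper's proof is a one-step verification: since $x$ is the longest member of $[x]_D$, we have $ax\not\equiv_D x$ and $bx\not\equiv_D x$, hence $\suf([ax]_D)=[x]_D$ and $\suf([bx]_D)=[x]_D$, and $[ax]_D\neq[bx]_D$, which already gives two distinct in-pointing suffix links. You instead walk the suffix-link path from $[ax]_D$, argue via the characterization of such paths that $v$ is visited, take the immediate predecessor $v_a$ of $v$ on the path, and then compute the labels of the links $v_a\to v$ and $v_b\to v$ to separate $v_a$ from $v_b$; this is sound, but the obstacle you flag (whether $v$ could be skipped over) does not actually arise, because $ax$ is the \emph{shortest} member of $[ax]_D$: any shorter member would be a suffix of $x$, and the suffix-chain structure of equivalence classes would then force $x\equiv_D ax$, contradicting the maximality of $x$ in $[x]_D$. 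Hence the longest suffix of $ax$ not equivalent to $ax$ is $x$ itself, so the suffix link of $[ax]_D$ points directly at $v$; in other words your $v_a$ and $v_b$ are just $[ax]_D$ and $[bx]_D$, whose distinctness is immediate (distinct strings of equal length have disjoint, nonempty $\ePos{\cdot}{D}$ sets). The label computation is a nice consistency check against the suffix-link labeling described in Section~\ref{sec:invdawg}, but it is not needed for the conclusion.
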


\begin{proof}
	Let $[x]_D$ be any non-trunk node in $\DAWG{D}$ and $x \in \Substr{D}$ be the longest member of $[x]_D$.
	Then $x \not\in \PrefixSet{D}$ because $[x]_D$ is a non-trunk node.
	By Lemma~\ref{lem:lognest member of [x]_D}, there exist two distinct $a, b \in \Sigma$ such that
	$ax, bx \in \Substr{D}$. Because $x$ is the longest member of $[x]_D$, we have $[ax]_D \neq [x]_D$.
	Thus, $\suf{([ax]_D)} = [x]_D$ because $x$ is a suffix of $ax$.
	Similarly, $\suf{([bx]_D)} = [x]_D$. Because $[ax]_D \neq [bx]_D$, the non-trunk node $[x]_D$ is pointed by at least two suffix links.
\end{proof}

\begin{lemma}
\label{lem:outstatetime}
	Algorithm~\ref{alg:outstateins} runs in $O(m\log\sigma+\onum)$ time,
	where $\onum$ is the number of states on which output function should be updated.
\end{lemma}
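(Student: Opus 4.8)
The plan is to split the running time of Algorithm~\ref{alg:outstateins} into its two phases: the initial traversal of $p$ in $\ACautomaton{D}$ (Lines~\ref{line:begin traverseP}--\ref{line:end traverseP}), and the BFS over inverse suffix links of $\DAWG{D}$ (Lines~\ref{line:begin findOut}--\ref{line:end findOut}). The first phase walks at most $m$ $\mathit{goto}$ transitions, each costing $O(\log\sigma)$ to look up, so it takes $O(m\log\sigma)$ time; if $\activeNode$ becomes $\Null$ we stop even earlier. This already gives the $m\log\sigma$ summand, so everything hinges on bounding the second phase by $O(m\log\sigma + \onum)$.

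For the second phase, first observe that if Line~\ref{line:begin traverseP}'s loop did not reach a node representing $p$ (i.e.\ $\activeNode = \Null$ at the end), then $p \notin \PrefixSet{D}$, and in fact $p$ need not even be in $\Substr{D}$; but the algorithm correctly skips the BFS entirely in that case. Otherwise $\activeNode$ is exactly the DAWG node $v_p = [p]_D$, and the BFS explores precisely the set of nodes reachable from $v_p$ by following inverse suffix links, i.e.\ the subtree rooted at $v_p$ in the inverse suffix link tree of $\DAWG{D}$ (which, as recalled in Section~\ref{sec:invdawg}, is the suffix tree of the reversed patterns). By Lemma~\ref{lem:outstate} the trunk nodes in this subtree are in bijection (via $\AC$) with the states whose output function must be updated, so there are exactly $\onum$ trunk nodes visited. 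The crux is to bound the number of \emph{non-trunk} nodes visited: here I would invoke Lemma~\ref{lem:suffixlink}, which says every non-trunk node is the target of at least two suffix links, hence has at least two children in the inverse suffix link tree. Thus in the visited subtree the non-trunk nodes are all internal with $\geq 2$ children, so their number is at most the number of leaves of the subtree; and since every leaf of this subtree is a node with no outgoing inverse suffix link, such a leaf is necessarily a trunk node (a non-trunk node would need $\geq 2$ children, contradiction). Therefore the number of leaves is at most $\onum$, so the total number of visited nodes is $O(\onum)$, and since each node's inverse-suffix-link array access costs $O(\log\sigma)$, the BFS runs in $O(\onum \log\sigma)$ time.

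Combining, the algorithm runs in $O(m\log\sigma + \onum\log\sigma)$ time. To get the claimed $O(m\log\sigma + \onum)$ one should be slightly more careful about the $\log\sigma$ factor on the $\onum$ term: at each visited node $\node$ the \textbf{for} loop over $\lnode \in \isuf(\node)$ does not re-scan the whole array but simply enumerates its entries, and charging $O(1)$ per entry (the $\log\sigma$ only being paid when one \emph{searches} for a particular label, which this loop does not do), the enumeration over all visited nodes costs $O(\sum_{\node}|\isuf(\node)|) = O(\onum)$ total, since each visited node except $v_p$ contributes exactly one inverse-suffix-link edge to its parent. Hence the BFS is $O(\onum)$ and the overall bound is $O(m\log\sigma + \onum)$. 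The main obstacle, and the place needing the most care, is exactly this accounting of the non-trunk nodes: one must argue via Lemma~\ref{lem:suffixlink} that they cannot blow up the subtree beyond $O(\onum)$ nodes, rather than being bounded only by the (potentially much larger) total DAWG size.
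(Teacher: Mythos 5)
Your proof is correct and follows essentially the same route as the paper's: an $O(m\log\sigma)$ traversal to locate the DAWG node representing $p$, followed by a counting argument based on Lemma~\ref{lem:suffixlink} showing that the non-trunk nodes in the visited inverse-suffix-link subtree are at most as numerous as the trunk nodes, so only $O(\onum)$ nodes are visited. The only differences are cosmetic: the paper bounds the non-trunk nodes by counting their children rather than the leaves, and your explicit remark that the BFS pays $O(1)$ per enumerated inverse suffix link (so no extra $\log\sigma$ factor on $\onum$) makes precise a point the paper leaves implicit.
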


\begin{proof}
	At first, Algorithm~\ref{alg:outstateins} finds the node $v$ representing the pattern $p$, by traversing the nodes from the root, in Lines~\ref{line:begin traverseP}--\ref{line:end traverseP}.
	It takes $O(m\log\sigma)$ time.
	If it failed, done.
	Then we analyze the running time consumed in Lines~\ref{line:begin findOut}--\ref{line:end findOut} by counting the number $\ell$ of visited nodes in $\DAWG{D}$.
	These nodes form a tree, rooted at $v$ and connected by inverse suffix links chains.
	Let $\bnum$ (resp. $\tnum$) be the number of non-trunk (resp. trunk) nodes in this tree,
	and let $q$ be the number of nodes (either non-trunk or trunk) that are child nodes of some non-trunk node.
	Because every non-trunk node has at least two child nodes by Lemma~\ref{lem:suffixlink},
	we have $2\bnum \leq q$, and obviously $q \leq \bnum + \tnum$. Thus, $\bnum \leq \tnum$, which yields that
	$\ell = \bnum + \tnum \leq 2\tnum = 2|\outStates| = 2 \onum$.
	Therefore, Algorithm~\ref{alg:outstateins} runs in $O(m\log\sigma+\onum)$ time.
\end{proof}

\begin{lemma}
\label{lem:failstatetime}
	Algorithm~\ref{alg:failstateins} runs in $O(m\log\sigma+\fnum)$ time,
	where $\fnum$ is the number of states whose failure links should be updated.
\end{lemma}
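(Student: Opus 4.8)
The plan is to mirror the structure of the proof of Lemma~\ref{lem:outstatetime}, but now applied to Algorithm~\ref{alg:failstateins}. The running time splits into two parts: the initial forward traversal of $p$ from the root that collects the candidate pairs $(\activeNode, i)$ in Lines~\ref{line:fail1}--\ref{line:fail2}, and the subsequent backward exploration via inverse suffix links in Lines~\ref{line:fail3}--\ref{line:fail4}. The first part is immediate: following at most $m$ transitions in $\DAWG{D}$ takes $O(m\log\sigma)$ time, and at most $|\newStatesSet| \le m$ pairs are pushed onto the stack, so this contributes $O(m\log\sigma)$ overall. The bulk of the argument concerns the second part.

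For the backward phase I would bound the total number of DAWG nodes visited across all iterations of the outer \textbf{while} loop. The crucial feature here, which distinguishes this from Algorithm~\ref{alg:outstateins}, is that each visited node is \emph{marked} the first time it is reached, and the recursion from a node does not continue once a trunk node is hit (Line~\ref{line:fail4} only recurses on branch, i.e.\ non-trunk, nodes). So the set of visited nodes, over the whole execution, forms a forest in the inverse-suffix-link tree: the roots are the category-2 nodes $[p[i:j]]_D$ from the stack, and each tree consists of a connected region of non-trunk internal nodes whose ``frontier'' leaves are trunk nodes (the ones whose corresponding AC-states are pushed to $\failStates$). Marking guarantees no node is visited twice even though different stack entries might reach overlapping regions. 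Letting $\bnum$ and $\tnum$ denote the numbers of non-trunk and trunk visited nodes respectively, I would invoke Lemma~\ref{lem:suffixlink}: every visited non-trunk node has at least two children in the inverse-suffix-link tree, and since recursion stops at trunk nodes, in fact every non-trunk visited node has at least two \emph{visited} children. A counting argument identical in spirit to Lemma~\ref{lem:outstatetime} — comparing the number of ``child slots'' $q$ with $2\bnum \le q \le \bnum + \tnum$ — yields $\bnum \le \tnum$, hence the total number of visited nodes is at most $2\tnum$. Since $\tnum$ is exactly the number of trunk nodes found, which is $|\failStates| = \fnum$ (by Lemma~\ref{lem:failstate} the algorithm returns precisely the states needing update), the backward phase visits $O(\fnum)$ nodes, each processed in $O(\log\sigma)$ time for the inverse-suffix-link array lookups, giving $O(\fnum\log\sigma)$; combined with the first phase this is $O(m\log\sigma + \fnum)$ once the $\log\sigma$ factor is absorbed appropriately, or more precisely $O(m\log\sigma + \fnum\log\sigma)$ — and I would state it in whichever form matches the paper's convention, noting that array accesses cost $O(\log\sigma)$.

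The main obstacle I anticipate is handling the interaction of the \emph{marking} with the multi-root structure: unlike Algorithm~\ref{alg:outstateins}, which explores a single tree rooted at $[p]_D$, here several stack entries $(\activeNode, i)$ for different $i$ may initiate searches, and their explored regions could a priori overlap or a node could be a root of one search while being an internal/frontier node of another. I need to argue carefully that marking makes the union of all explored regions behave like a single forest in which the "at least two children" property still holds at every non-trunk node that is genuinely expanded — in particular, that a non-trunk node which is reached but already marked (so not re-expanded) does not break the counting, because it was fully expanded the first time. A clean way to do this is to fix the forest of \emph{first visits} and observe that the children-counting inequality from Lemma~\ref{lem:suffixlink} applies within it verbatim, since the first time a non-trunk node is popped it does push all its inverse-suffix-link children. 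The remaining routine check is that trunk nodes, even if reached multiple times or via Line pushing $(\AC(\node),i)$ repeatedly, are each counted once toward $\fnum$ — which again follows from the marking guard, or can be charged to the (correct, by Lemma~\ref{lem:failstate}) output size.
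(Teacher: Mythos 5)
Your proposal is correct and takes essentially the same route as the paper: the same split into an $O(m\log\sigma)$ forward scan collecting the category-2 stack entries and a backward exploration whose visited nodes form a marked forest, analyzed by the same counting argument via Lemma~\ref{lem:suffixlink} ($2\bnum \le q \le \bnum+\tnum$, hence $\bnum\le\tnum$ and at most $2\fnum$ visited nodes). The only minor remark: enumerating all inverse suffix links of a popped node from its stored array costs $O(1)$ per child, so the backward phase is $O(\fnum)$ and the extra $\log\sigma$ factor you hedge about is unnecessary.
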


\begin{proof}
	At first, Algorithm~\ref{alg:failstateins} finds the set $V$ of nodes representing the pattern $p[1:j]$ for $1 \leq j \leq m$
	such that $p[1:j]$ does not exist in $\ACautomaton{D}$ but does exist in $\DAWG{D}$,
	by traversing the nodes from the root, in Lines~\ref{line:fail1}--\ref{line:fail2}.
	The algorithm saves the nodes in a stack, because the algorithm will search from the deepest node.
	This takes $O(m\log\sigma)$ time.
	Then we analyze the running time consumed in Lines~\ref{line:fail3}--\ref{line:fail4} by counting the number $\ell$ of visited nodes in $\DAWG{D}$.
	These nodes form a forest, where each tree is rooted by some node in $V$ and connected by inverse suffix link chains,
	where some node in $V$ can be an inner node of a tree rooted by another in $V$.
	In this case, we mark the nodes that have been visited, so each node is visited at most twice.
	Let $\bnum$ (resp.\ $\tnum$) be the number of non-trunk (resp.\ trunk) nodes in this forest,
	and let $q$ be the number of nodes (either non-trunk or trunk) that are child nodes of some non-trunk node.
	Because every non-trunk node has at least two child nodes by Lemma~\ref{lem:suffixlink},
	we have $2\bnum \leq q$, and obviously $q \leq \bnum + \tnum$. Thus, $\bnum \leq \tnum$, which yields that
	$\ell = \bnum + \tnum \leq 2\tnum = 2|\failStates| = 2 \fnum$.
\end{proof}

\begin{theorem}
	AC-automaton can be updated for each pattern in $O(m\log\sigma + \fnum + \onum)$ time.
\end{theorem}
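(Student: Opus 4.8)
The plan is to charge each phase of Algorithm~\ref{alg:insertion} (pattern insertion) and Algorithm~\ref{alg:deletion} (pattern deletion) to one of the three terms $m\log\sigma$, $\fnum$, and $\onum$, appealing to Lemmas~\ref{lem:outstatetime} and~\ref{lem:failstatetime} for the two subroutines $\getOutStates$ and $\getFailStates$ that dominate the remaining costs.

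Consider insertion first. The initial $\mathbf{for}$ loop walks down the path spelled by $p$ from $\rootState$, performing one $\mathit{goto}$ lookup per character and, whenever the transition is missing, creating one new state and one new edge and incrementing $\weight{\cdot}$; this costs $O(m\log\sigma)$ and produces $\newStatesSet$, which is a contiguous suffix of the path and has size at most $m$. The call to $\getFailStates$ costs $O(m\log\sigma + \fnum)$ by Lemma~\ref{lem:failstatetime}, after which the loop that pops the $\fnum$ pairs from $\failStates$ and redirects each corresponding failure link is $O(\fnum)$, since every step is a pop plus a constant-time array index and pointer assignment. Symmetrically, $\getOutStates$ costs $O(m\log\sigma + \onum)$ by Lemma~\ref{lem:outstatetime}, and writing $p$ into the $\out$ set of each of the $\onum$ returned states is $O(\onum)$.

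The only phase that is not an immediate appeal to the above is the loop that installs the failure links and output sets of the newly created states by the classical Aho-Corasick rule: for each new state, follow a chain of failure links from $\failure(\activeState)$ until a $\mathit{goto}$ transition on the current character of $p$ is defined. Here I would invoke the standard potential argument: letting $\psi_i$ be the depth of $\failureState$ at the start of the $i$-th iteration, one checks that the number of failure-link traversals performed during that iteration is $\psi_i - \psi_{i+1} + O(1)$, so the total over all $m$ characters telescopes to $O(m)$; each traversal is a single $\mathit{goto}$ lookup costing $O(\log\sigma)$, hence this phase runs in $O(m\log\sigma)$. Summing the phases gives $O(m\log\sigma + \fnum + \onum)$ for insertion.

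Deletion is analyzed in the same way: the first loop walks the path of $p$, decrementing $\weight{\cdot}$ and collecting the at most $m$ states whose weight reaches $0$ into $\deleteStatesSet$, in $O(m\log\sigma)$; the call to $\getFailStates$ together with the subsequent loop that redirects the affected failure links costs $O(m\log\sigma + \fnum)$; $\getOutStates$ together with removing $p$ from the $\onum$ affected $\out$ sets costs $O(m\log\sigma + \onum)$; and physically deleting the at most $m$ states, each with its $O(\log\sigma)$ incident transitions, costs $O(m\log\sigma)$. This again sums to $O(m\log\sigma + \fnum + \onum)$, the claimed bound (the time to update the DAWG itself, $O(m\log\sigma)$ in the semi-dynamic and $O(\sigma m)$ in the dynamic setting, is accounted for separately). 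The place requiring care, and the main obstacle, is the failure-link installation loop for the new states: everything else is either a direct use of Lemmas~\ref{lem:outstatetime} and~\ref{lem:failstatetime} or plainly linear pointer manipulation, but that loop needs the amortized depth argument to keep its cost at $O(m\log\sigma)$ rather than $\Theta(m^2)$, and one should also confirm that the two redirection loops touch each affected state only a constant number of times --- which holds because $\getFailStates$ returns each such state exactly once.
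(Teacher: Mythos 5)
Your proof is correct and follows essentially the same decomposition as the paper's: charge the new-state computations to $O(m\log\sigma)$ via the standard Aho--Corasick construction argument (which the paper cites by reference and you spell out with the telescoping depth potential), and charge the updates of existing states to Lemmas~\ref{lem:outstatetime} and~\ref{lem:failstatetime}. The only difference is that you treat deletion and the bookkeeping loops explicitly, which the paper leaves implicit.
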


\begin{proof}
	The goto, failure and output functions of newly created states can be calculated in $O(m\log\sigma)$,
	similarly to the original AC-automaton construction algorithm.	
	From Lemmas~\ref{lem:outstatetime} and~\ref{lem:failstatetime},
	the output and failure functions on existing states can be updated in $O(m\log\sigma+\onum)$ and $O(m\log\sigma+\fnum)$, respectively.		
	Therefore, AC-automaton can be updated in $O(m\log\sigma + \fnum + \onum)$ time in total.
\end{proof}

Note that \emph{any} algorithm
which explicitly updates the AC-automaton 
requires at least $\Omega(m + \fnum + \onum)$ time.
Hence, the bound in the above theorem is optimal
except for the term $\log \sigma$ which can be ignored for
constant alphabets.
As was stated in Section~\ref{sec:intro},
$\fnum$ and $\onum$ can be considerably small in several cases.

The remaining question is
how large $\fnum$ and $\onum$ can be in the worst case.
The next theorem shows matching upper and lower bounds
on $\fnum$ and $\onum$.

\begin{theorem}
  For any pattern of length $m$,
  $\fnum = O(km)$ and $\onum = O(km)$, where $k$ is the number of patterns to insert to the current dictionary.
  Also, there exists a family of patterns for which $\fnum = \Omega(km)$ and $\onum = \Omega(km)$.
\end{theorem}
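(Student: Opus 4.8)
The plan is to bound $\fnum$ and $\onum$ separately as functions of the number $k$ of patterns already inserted (equivalently, the number of terminal states of the current AC-automaton corresponding to dictionary patterns) and the length $m$ of the pattern being inserted or deleted. For $\onum$, recall from Lemma~\ref{lem:outstate} that a state $s$ needs its output function updated iff $p$ is a suffix of the string $s$; equivalently, writing $v_p = [p]_D$, the trunk node $\AC^{-1}(s)$ lies in the subtree of the inverse-suffix-link tree rooted at $v_p$. Any such $s$ has $p$ as a suffix, so $s$ has length at least $m$; more importantly, $\out(s) \supseteq \{p\}$ means $p$ occurs as a suffix of $s$, but the states we charge are exactly the trunk descendants of $v_p$, which are strings in $\PrefixSet{D}$ having $p$ as a suffix. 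I would argue that each such string is a prefix of some dictionary pattern and is "pinned" by one of the $k$ patterns in a way that bounds their number by $O(km)$: a prefix $w$ of a pattern $p_i$ that has $p$ as a suffix is determined by $i$ together with the ending position of $w$ within $p_i$, and the "new" contributions (those not already counted before the $k$ insertions) number $O(m)$ per pattern. For $\fnum$, the states whose failure links must be redirected are, by Lemma~\ref{lem:failstate}, trunk nodes reachable via inverse suffix links from the category-2 nodes representing $p[i{:}j]$; each such trunk node corresponds to a prefix $w$ of some $p_\ell$ such that some suffix of $w$ equals some $p[i{:}j]$, and again one charges $w$ to the pair (index $\ell$, position in $p_\ell$), giving $O(km)$.

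**Lower bound.** For the matching $\Omega(km)$ lower bound I would exhibit an explicit family. The natural construction mirrors the one used in Lemma~\ref{lem:DAWGlow}: take a "comb" dictionary in which a single short pattern $p$ is a substring (or suffix) of $\Theta(k)$ long patterns, each of length $\Theta(m)$, arranged so that $\Theta(m)$ distinct prefixes of each long pattern have $p$ as a suffix. Concretely, over a small alphabet one can use patterns of the form $p\,w_1,\;p\,w_2,\;\ldots$ with carefully chosen tails $w_\ell$ (or, for the $\onum$ bound, patterns each containing many overlapping copies of $p$ as prefixes-with-$p$-suffix), so that inserting $p$ itself (or the last of the long patterns) forces $\Theta(km)$ output-function updates; a symmetric construction with $p$ appearing as an internal substring of $\Theta(k)$ patterns forces $\Theta(km)$ failure-link updates. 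I would then note that because insertion and deletion are inverse operations, the same family witnesses the lower bound for deletion as well.

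**Main obstacle.** The delicate point is the upper bound, not the lower bound: one must argue that although a prefix $w$ of a pattern could in principle be charged by every one of the $k$ patterns, the states that are \emph{actually visited} by Algorithms~\ref{alg:outstateins} and \ref{alg:failstateins} form the specific subtree/forest analyzed in Lemmas~\ref{lem:outstatetime} and~\ref{lem:failstatetime}, whose size is $2\onum$ resp.\ $2\fnum$ — so the task reduces to bounding $\onum$ and $\fnum$ directly by counting distinct prefixes $w \in \PrefixSet{D'}$ having the inserted pattern $p$ (resp.\ a category-2 factor of $p$) as a suffix, and showing this count increases by only $O(m)$ per inserted pattern. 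The cleanest route is: fix the final dictionary; each counted state $s$ is a prefix of exactly one "witness" pattern $p_\ell$ in the dictionary before the $k$ insertions plus possibly one of the $k$ new ones; for a fixed witness of length $L$ there are at most $\min(L,?)$ positions where $p$ can end, but crucially only $O(m)$ of these correspond to prefixes that did not already exist, because the shared portion among the $k$ insertions is limited. I expect the bookkeeping of "new versus pre-existing" prefixes to be the part requiring the most care, and I would make it precise by a potential/charging argument that assigns each newly counted state to a distinct (pattern-index, offset) pair with offset in $\{1,\dots,m\}$.

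\begin{proof}[Proof sketch]
We prove the two bounds in turn; the deletion case follows by symmetry since deleting $p$ undoes exactly the updates performed by inserting $p$.

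\emph{Upper bound on $\onum$.} By Lemma~\ref{lem:outstate} and the analysis in Lemma~\ref{lem:outstatetime}, $\onum$ equals the number of trunk nodes of $\DAWG{D}$ lying in the subtree of the inverse-suffix-link tree rooted at $v_p=[p]_D$; equivalently, $\onum$ is the number of distinct strings $w\in\PrefixSet{D}$ such that $p$ is a suffix of $w$. Each such $w$ is a prefix of at least one pattern $p_\ell\in D$, and $w$ is determined by $\ell$ together with $|w|$. Consider the $k$ patterns inserted most recently: for any pattern $p_\ell$ present \emph{before} these $k$ insertions, the set of its prefixes ending with $p$ was already counted in a previous step, so the \emph{new} such prefixes all lie among prefixes of the $k$ new patterns. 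A single new pattern of length $m'\le m$ has at most $m'\le m$ prefixes, hence contributes at most $m$ new strings $w$ ending with $p$. Summing over the $k$ new patterns gives $\onum=O(km)$.

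\emph{Upper bound on $\fnum$.} By Lemma~\ref{lem:failstate} and Lemma~\ref{lem:failstatetime}, $\fnum$ is bounded by the number of trunk nodes reachable via inverse suffix links from the category-2 nodes, i.e.\ by the number of distinct $w\in\PrefixSet{D}$ possessing a suffix equal to some $p[i{:}j]$ with $|x|<j\le|x|+|y|$. Exactly as above, each such $w$ is identified by a (pattern-index, length) pair, and the prefixes contributed by patterns present before the last $k$ insertions were counted in earlier steps; each of the $k$ new patterns contributes at most $m$ prefixes in total, so $\fnum=O(km)$.

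\emph{Lower bound.} Fix a small constant alphabet $\{\mathtt a,\mathtt b\}$ (padding with extra symbols if desired). Let $p=\mathtt a^m$ and, for $1\le \ell\le k$, let $q_\ell = \mathtt a^{m}\mathtt b^{\ell}$. Insert $q_1,\dots,q_k$ first, then insert $p$. For each $\ell$, every prefix $\mathtt a^{t}$ of $q_\ell$ with $1\le t\le m$ becomes, after inserting $p$, a trunk node whose output set must be updated to include $p$ precisely when $\mathtt a^{t}$ has $p=\mathtt a^{m}$ as a suffix — which happens for $t=m$; to force $\Theta(m)$ updates \emph{per} pattern instead of one, replace $q_\ell$ by a pattern that contains $m$ overlapping occurrences of $p$ as prefixes-with-$p$-suffix, e.g.\ $q_\ell = \mathtt a^{2m}\mathtt b^\ell$, whose prefixes $\mathtt a^t$ for $m\le t\le 2m$ all have $\mathtt a^m$ as a suffix; these are $\Theta(m)$ distinct trunk states for each of the $k$ patterns, all needing an output update when $p$ is inserted, giving $\onum=\Omega(km)$. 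A symmetric construction in which $p$ occurs as an \emph{internal} substring of each of $k$ long patterns, so that inserting $p$ redirects the failure links of $\Theta(m)$ prefixes of each, yields $\fnum=\Omega(km)$. The same families work for deletion by first building the dictionary including $p$ and then deleting $p$.
\end{proof}
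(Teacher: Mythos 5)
Both halves of your argument have genuine gaps. For the upper bound, the key claim --- that after inserting $p$ the states needing updates lie only among prefixes of the $k$ newly inserted patterns, at most $m$ per pattern, because prefixes of patterns present earlier ``were already counted in a previous step'' --- is false. When $p$ enters the dictionary, \emph{every} prefix $w\in\PrefixSet{D}$ of every pattern, old or new, that has $p$ as a suffix must have its output set (resp.\ failure link) changed, and these updates could not have been counted earlier, since $p$ was not yet in the dictionary. A single old pattern $\mathtt{a}^L$ already defeats the charging: inserting $p=\mathtt{a}^m$ forces updates at the $L-m+1$ states $\mathtt{a}^t$, $m\le t\le L$, which is not $O(m)$; likewise your assumption that the other inserted patterns have length $m'\le m$ has no basis. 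Indeed, in the paper's own lower-bound family ($D=\{c_i\mathtt{a}^k\}$ with the patterns $\mathtt{a}^j$ inserted afterwards) all updated states are prefixes of the \emph{pre-existing} patterns. The bound $O(km)$ is really an amortized statement, and the paper proves it by a one-line global averaging argument: over the whole sequence of $k$ insertions, each of the at most $d$ states is touched at most once per insertion, so the total number of updates is at most $kd$, i.e.\ at most $k$ per character of the dictionary, which is what ``$\fnum,\onum = O(km)$ for a pattern of length $m$'' means there. A per-insertion charging of the kind you sketch cannot be repaired, because the worst-case per-insertion statement it would establish is not true.

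For the lower bound, your explicit family does not work: the $k$ patterns $q_\ell=\mathtt{a}^{2m}\mathtt{b}^\ell$ all share the prefix $\mathtt{a}^{2m}$, so in the trie the prefixes having $p=\mathtt{a}^m$ as a suffix form the single chain of states $\mathtt{a}^m,\mathtt{a}^{m+1},\ldots,\mathtt{a}^{2m}$ --- only $m+1$ distinct states in total, not $\Theta(m)$ per pattern --- giving $\onum=\Theta(m)$, not $\Omega(km)$; the $\fnum$ construction is only gestured at. To get $\Omega(km)$ the occurrences of $p$ must sit on $\Theta(k)$ \emph{distinct} branches of the trie, which is exactly what the paper arranges: it takes $x=\Theta(k)$ pre-existing patterns $c_i\mathtt{a}^k$ with pairwise distinct first symbols $c_i$ and then inserts $\mathtt{a}^1,\ldots,\mathtt{a}^k$ one by one, so that the $j$-th insertion updates the $x(k-j+1)$ pairwise distinct states $c_i\mathtt{a}^t$ with $t\ge j$, averaging $\Omega(k)$ updates per inserted character. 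Your closing remark that deletion mirrors insertion does agree with the paper's observation.
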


\begin{proof}
  In this proof, we only show bounds for $\fnum$; however,
  the same bounds for $\onum$ can be obtained similarly.
  
  First, we show an upper bound $\fnum = O(km)$.
  We begin with an empty dictionary 
  and insert patterns to the dictionary.
  Let $d$ be the total length of the patterns in the dictionary
  after adding all patterns,
  and let $\utotal$ be the total number of AC-automaton states whose failure links need to be updated during the insertion of all patterns.
  If $k$ is the number of patterns to insert,
  then clearly $\utotal \leq kd$ holds.
  Hence, the number of failure links to update
  per character is $\utotal /d \leq k$.
  This implies that for any pattern of length $m$,
  the number $\fnum$ of failure links to update is $O(km)$.

  To show a lower bound $\fnum = \Omega(km)$,
  consider an initial dictionary $D = \{c_i a^k \mid 1 \leq i \leq x\}$
  of $x \geq 1$ patterns, where $k \geq 1$ and
  $c_i \neq c_j$ for any $1 \leq i \neq j \leq x$.
  For each $j = 1, 2, \ldots, k$ in increasing order,
  we insert a new pattern $a^j$ to the dictionary.
  Then, the total number $\utotal$ of failure links to update for all $a^j$'s
  is
  \[ \utotal = xk + x(k-1) + x(k-2) + \cdots + x = xk(k+1)/2.
  \]
  Let $\dadd$ be the total length of patterns to insert to the initial dictionary,
  and $d$ the total length of the patterns 
  after adding all patterns to the initial dictionary.
  Then $d = xk + \dadd = xk + k(k+1)/2$.
  Hence, the number of failure links to update
  for each character in the added patterns $a^j$'s
  is 
  \[ \frac{\utotal}{\dadd} < \frac{\utotal}{d} = \frac{xk(k+1)}{2xk + k(k+1)} = \frac{x(k+1)}{2x+k+1} = \Omega\Big(\frac{xk}{x+k}\Big),
  \]
  which becomes $\Omega(k)$ by choosing $x = \Omega(k)$.
  Hence, for each $1 \leq m \leq k$,
  when we add pattern $a^m$ of length $m$ to the dictionary,
  $\fnum = \Omega(km)$ failure links need to be updated.
\end{proof}

The arguments in the above proof consider the semi-dynamic case where
only insertion of new patterns in supported.
However, if we delete all patterns after they have been inserted,
then exactly the same number of failure links need to be updated.
Hence, the same matching upper and lower bounds hold also for the dynamic case
with both insertion and deletion of patterns.


	\section{Conclusions and Future Work}

We proposed two new algorithms for dynamic dictionary matching,
based on DAWGs and AC-automata.

The semi-dynamic version of our first method,
which uses the DAWG, supports
updates (insertions of patterns)
in $O(m \log \sigma)$ time 
and pattern matching in $O(n \log \sigma + \occ)$ time,
while the dynamic version supports
updates (insertions and deletions of patterns)
in $O(m (\log d / \log \log d + \log \sigma))$ time
and pattern matching in $O(n (\log d / \log \log d + \log \sigma) + \occ\log d / \log \log d)$ time.
Our second method supports updating the AC-automaton
in $O(m\log \sigma + \fnum + \onum)$ time with the additional DAWG update time,
and pattern matching in $O(n \log \sigma + \occ)$ time.
Here, $m$, $\sigma$, $n$, $\occ$, $d$, $\fnum$, and $\onum$
respectively denote the length of the pattern to insert/delete,
the alphabet size, the length of the text,
the number of occurrences of patterns in the text,
the total length of the patterns,
the number of AC-automaton states on which
the failure functions need to be updated,
and the number of AC-automaton states on which
the output functions need to be updated.
Since $\fnum$ and $\onum$ are the minimum costs to explicitly update
the AC-automaton, our second method is
faster than any existing dynamic dictionary matching
algorithms based on AC-automata~\cite{ishizaki2012incremental,meyer1985incremental,tsuda1995incremental}.

An intriguing open question is whether or not one can achieve
$O(m \log \sigma)$ update time and $O(n \log \sigma + \occ)$ pattern
matching time for dynamic dictionary matching allowing for
both insertions and deletions of patterns.

    \section*{Acknowledgments}

The research of Diptarama, Ryo Yoshinaka, and Ayumi Shinohara is supported by Tohoku University Division for Interdisciplinary Advance Research and Education,
JSPS KAKENHI Grant Numbers JP15H05706, \\ JP24106010, and ImPACT Program of Council for Science, Technology and Innovation (Cabinet Office, Government of Japan). The research of Shunsuke Inenaga is in part supported by JSPS KAKENHI Grant Number 17H01697.


    \bibliographystyle{abbrv}
	\bibliography{ref}
	
\end{document}